\documentclass[12pt,journal,final,onecolumn]{IEEEtran}

\usepackage[dvips]{graphicx}
\usepackage{epsfig}
\usepackage[cmex10]{amsmath}
\usepackage{amssymb}
\usepackage{amsthm}
\usepackage{amsfonts}
\usepackage{bm}
\usepackage{dsfont}
\usepackage{color}
\usepackage[mathscr]{eucal}
\usepackage{cite}
\usepackage{array}
\usepackage{fixltx2e}
\usepackage{bookmark}

\graphicspath{{figs/}}

\interdisplaylinepenalty=2500  

\input{niesen_defs.def}

\begin{document}

\title{Caching in Wireless Networks} 

\author{Urs~Niesen, Devavrat~Shah, Gregory W. Wornell%
\thanks{U. Niesen was with the Laboratory for Information and Decision
Systems and with the Research Laboratory of Electronics, Department of
EECS, MIT. He is now with the Mathematics of Networks and Communications
Research Department, Bell Labs, Alcatel-Lucent.
Email: {\tt urs.niesen@alcatel-lucent.com}}%
\thanks{D. Shah is with the Laboratory for Information and Decision
Systems, Department of EECS, MIT. Email: {\tt devavrat@mit.edu}}%
\thanks{G. W. Wornell is with the Research Laboratory of Electronics,
Department of EECS, MIT. Email: {\tt gww@mit.edu}}%
\thanks{This work was supported, in parts, by DARPA under Grant No.
18870740-37362-C, AFOSR under Grant No. FA9550-09-1-0317, 
and by NSF under Grant No. CCF-0635191.}}

\maketitle

\begin{abstract}
    We consider the problem of delivering content cached in a wireless
    network of $n$ nodes randomly located on a square of area $n$. The
    network performance is described by the $2^n\times n$-dimensional
    caching capacity region of the wireless network. We provide an inner
    bound on this caching capacity region, and, in the high path-loss
    regime, a matching (in the scaling sense) outer bound. For large
    path-loss exponent, this provides an information-theoretic scaling
    characterization of the entire caching capacity region. The proposed
    communication scheme achieving the inner bound shows that the
    problems of cache selection and channel coding can be solved
    separately without loss of order-optimality.  On the other hand, our
    results show that the common architecture of nearest-neighbor cache
    selection can be arbitrarily bad, implying that cache selection and
    load balancing need to be performed jointly.
\end{abstract}

\section{Introduction}
\label{sec:intro}

Wireless networks are an attractive communication architecture in many
applications as they require only minimal fixed infrastructure.  While
unicast and multicast traffic in wireless networks has been widely
studied, the influence of caches on the network performance has received
considerably less attention. Nevertheless, the ability to replicate data
at several places in the network is likely to significantly increase
supportable rates. In this paper, we consider the problem of
characterizing achievable rates with caching in large wireless networks.

In a rather general form, this problem can be formulated as follows.
Consider a wireless network with $n$ nodes, and assume a node $w$ in the
network requests a message available at the set of caches $U$ (a subset
of the nodes) at a certain rate $\lambdaca_{U,w}$. The collection of all
$\lambdaca_{U,w}$ can be represented as a caching traffic
matrix $\lambdaca\in\Rp^{2^{n}\times n}$. The question is then to
characterize the set of achievable caching traffic matrices
$\Lambdaca(n)\subset\Rp^{2^{n}\times n}$.

\subsection{Related Work}
\label{sec:prior}

Several aspects of caching in wireless networks have been investigated
in prior work. In the computer science literature, the wireless network
is usually modeled as a graph induced by the geometry of the node
placement. This is tantamount to making a protocol model assumption (as
proposed in \cite{gup}) about the communication scheme used. By
definition, such an approach assumes separation of source and channel
coding. The quantity of interest involves the distance from each node to
the closest cache that holds the requested message.  The problem of
optimal cache location for multicasting from a single source has been
investigated in \cite{nug, bha}. Optimal caching densities under uniform
random demand have been considered in \cite{jin, ko}. Several cache
replacement strategies are proposed, for example, in \cite{yin}.

To the best of our knowledge, caching has not been directly considered
in the information theory literature. However, the more general problem
of transmitting correlated sources over a network has been studied.
Caching is a special case of this problem, in which sources are either
independent or identical.  While for a single point-to-point channel
separation of source and channel coding was shown to be optimal by
Shannon \cite{sha}, the work by Cover, El Gamal, and Salehi \cite{cov}
established that separation is strictly suboptimal for the transmission
of correlated sources over a multiple access channel.  Hence, even for
simple networks, source and channel coding have to be considered
jointly. We note that for some special cases separate source and channel
coding is optimal, for example for transmitting arbitrarily correlated
sources over a network consisting of independent point-to-point links
\cite{han, bar, tia}. The general problem of joint source-channel coding
for noisy networks is unsolved.

Finally, it is worth mentioning the problem of transmitting unicast
traffic over a wireless network, which is a special case of the caching
problem with each message being available at only a single cache. This
problem has been widely studied. Approximate characterizations of the
unicast capacity region of large wireless networks (also known as
scaling laws) were derived, for example, in
\cite{gup,xie,jov,lev,xue,xie2,fra,gup2,ozg,nie,fra2,nie2}.

\subsection{Summary of Results}

We consider the general caching problem from an information-theoretic
point of view. Compared to the prior work mentioned in the last section,
there are several key differences. First, we do not make a protocol
channel model assumption, and instead allow the use of arbitrary
communication protocols over the wireless network including joint
source-channel coding. Second, we allow for general traffic demands,
i.e., arbitrary number of caches, and arbitrary demands at each
destination. Third, we do not impose that each destination requests the
desired message from only the closest cache, nor do we impose that the
entire message be requested from the same cache. Rather, we allow
parts of the same message to be requested from different caches.

We present a communication scheme for the caching problem, yielding an
inner bound on the caching capacity region $\Lambdaca(n)$. This
communication scheme performs separate source and channel coding. For
large values of path-loss exponent, we provide a matching (in the
scaling sense) outer bound, proving the approximate optimality of our
proposed scheme for large values of $n$. Together, this provides a
scaling description of the entire caching capacity region of the
wireless network in the large path-loss regime. This result further
implies that for caching traffic the loss due to source-channel
separation is small (again in the scaling sense) in the large path-loss
regime. Since caching traffic is a special case of correlated sources,
in which two sources are either identical or independent, this result is
a step towards understanding the loss incurred due to source-channel
separation for the transmission of arbitrarily correlated sources.

\subsection{Organization}

The remainder of this paper is organized as follows. Section
\ref{sec:model} introduces the channel model and notation.  Section
\ref{sec:main} presents the main results of the paper. Section
\ref{sec:proofs} analyzes the proposed communication scheme and
establishes its optimality (up to scaling) for large path-loss exponent.
Section \ref{sec:conclusions} contains concluding remarks.

\section{Network Model and Notation}
\label{sec:model}

Consider a square of area $n$, denoted by
\begin{equation*}
    A(n) \defeq [0,\sqrt{n}]^2.
\end{equation*}
Let $V(n)\subset A(n)$ be a set of $\abs{V(n)} = n$ nodes placed
independently and uniformly at random on $A(n)$. We assume the following
complex baseband-equivalent channel model. The received signal at node
$v$ and time $t$ is
\begin{equation*}
    y_v[t] \defeq \sum_{u\in V(n)\setminus\{v\}}h_{u,v}[t]x_u[t]+z_v[t]
\end{equation*}
for all $v\in V(n), t\in\mbb{N}$, and where $x_u[t]$ is the
channel input at node $u$ at time $t$. Here
$(z_v[t])_{v,t}$ are independent and identically distributed (i.i.d.)
circularly symmetric complex Gaussian random variables with mean $0$ and
variance $1$, and
\begin{equation*}
    h_{u,v}[t] \defeq r_{u,v}^{-\alpha/2}\exp(\sqrt{-1}\theta_{u,v}[t]),
\end{equation*}
for \emph{path-loss exponent} $\alpha>2$, and where $r_{u,v}$ is the
Euclidean distance between $u$ and $v$. Due to physical constraints, the
path-loss exponent $\alpha$ satisfies $\alpha \geq 2$; we adopt the
slightly stronger assumption $\alpha > 2$ because it simplifies the
statements and derivations of some of the results. The phase terms
$(\theta_{u,v}[t])_{u,v}$ are assumed to be i.i.d.\  with uniform
distribution on $[0,2\pi)$.\footnote{It is worth pointing out that the
i.i.d.\ assumption on the phase terms has to be made with some care.  In
particular, it is shown in \cite{fra2, lee10, ozg10b} that this
assumption is valid only if the wavelength of the carrier frequency is
less than $\card{A(n)}^{1/2}/n$. For a wide range of scenarios this is
the case, and we assume throughout this paper that this assumption
holds.} We either assume that $(\theta_{u,v}[t])_{t}$ is stationary
and ergodic as a function of $t$, which is called \emph{fast fading} in
the following, or we assume that $(\theta_{u,v}[t])_{t}$ is constant
as a function of $t$, which is called \emph{slow fading} in the
following. In either case, we assume full channel state information
(CSI) is available at all nodes, i.e., each node knows all
$(h_{u,v}[t])_{u,v}$ at time $t$.\footnote{We make the full CSI
assumption in all the converse results in this paper.  Achievability can
be shown to hold under weaker assumptions on the availability of CSI. In
particular, for $\alpha\geq 3$, no CSI is necessary, and for
$\alpha\in(2,3)$, a $2$-bit quantization of the channel state
$(\theta_{u,v}[t])_{u,v}$ available at all nodes at time $t$ is
sufficient.} We also impose an average unit power constraint on the
channel inputs $(x_u[t])_{t}$ for every node $u\in V(n)$.

A \emph{caching traffic matrix} is an element
$\lambdaca\in\Rp^{2^{n}\times n}$. Consider $w\in V(n)$ and $U\subset
V(n)$. Assume a message that is requested at destination node $w$ is
available at all of the caches $U$.  $\lambdaca_{U,w}$ denotes then the
rate at which node $w$ requests the message from the caches
$U$.\footnote{Note that several rates $\lambdaca_{U,w}$ are trivial. For
example for pairs $(U,w)$ with $w \in U$, or for pairs $(U,w)$ with
$U=\emptyset$. We allow these trivial choices for notational
convenience. For $(U,w)$ such that $w\in U$, the results will show that
$\lambdaca_{U,w}=\infty$ is achievable; for $U=\emptyset$, they will
show that only $\lambdaca_{U,w}=0$ is achievable, as would be expected.}
Note that we do not impose that any particular cache $u\in U$ provides
$w$ with the desired message, rather multiple nodes in $U$ could
provide parts of the message.  Note also that $\lambdaca_{U,w}$ and
$\lambdaca_{\tilde{U},w}$ could both be strictly positive for $U\neq
\tilde{U}$, i.e., the same destination could request more than one
message from different collection of caches. We assume that messages for
different $(U,w)$ pairs are independent. The \emph{caching capacity
region} $\Lambdaca(n)$ of the wireless network $V(n)$ is the closure of
the set of all achievable caching traffic matrices
$\lambdaca\in\Rp^{2^{n}\times n}$.

\begin{example}
    Consider $V(n)=\{v_i\}_{i=1}^4$ with $n=4$. Assume that $v_1$
    requests a message $m_{\{v_3,v_4\},v_1}$ available at the caches
    $v_3$, and $v_4$ at rate $1$ bit per channel use, and an independent
    message $m_{\{v_3\},v_1}$ available only at $v_3$ at a rate of $2$
    bits per channel use. Node $v_2$ requests a message
    $m_{\{v_3,v_4\},v_2}$ available at the caches $v_3$ and $v_4$ at a
    rate of $4$ bits per channel use. The messages
    $m_{\{v_3,v_4\},v_1}$, $m_{\{v_3\},v_1}$, and $m_{\{v_3,v_4\},v_2}$
    are assumed to be independent. This
    traffic pattern can be described by a caching traffic matrix
    $\lambdaca\in\Rp^{16\times 4}$ with $\lambdaca_{\{v_3,v_4\},v_1}=1$,
    $\lambdaca_{\{v_3\},v_1}=2$, $\lambdaca_{\{v_3,v_4\},v_2}=4$, and
    $\lambdaca_{U,w}=0$ otherwise.  Note that in this example node $v_1$
    is destination for two (independent) caching messages, and node
    $v_3$ and $v_4$ serve as caches for more than one message (but these
    messages are again assumed independent). 
\end{example}

To simplify notation, we assume when necessary that large reals are
integers and omit $\ceil{\cdot}$ and $\floor{\cdot}$ operators. For the
same reason, we suppress dependence on $n$ within proofs whenever this
dependence is clear from the context. We use bold font to denote
matrices whenever the matrix structure is of importance. We use the
$\dagger$ symbol to denote the conjugate transpose of a matrix. Finally,
$\log$ and $\ln$ represent the logarithms with respect to base $2$ and
$e$, respectively.

\section{Main Results}
\label{sec:main}

The main results of this paper are an achievable scheme and an outer
bound for the caching capacity region $\Lambdaca(n)$. Section
\ref{sec:main_region} describes a construction used in Section
\ref{sec:main_inner} to establish an inner bound for $\Lambdaca(n)$.
The communication scheme achieving this inner bound respects
source-channel separation and is valid for any value of path-loss
exponent $\alpha > 2$. In Section \ref{sec:main_outer}, we provide an
outer bound that matches (in the scaling sense) the inner bound for
large values of path-loss exponent $\alpha > 6$. This leads to an
approximate characterization of $\Lambdaca(n)$ for $\alpha > 6$. This
characterization is given in terms of a linear program and is hence
computationally tractable as is discussed in Section
\ref{sec:main_comput}. The communication architecture achieving the
inner bound on the caching capacity region is presented in Section
\ref{sec:main_scheme}. Various example scenarios are presented in
Section \ref{sec:main_examples}.

\subsection{Tree Graph and Linear Program}
\label{sec:main_region}

We describe the construction of a capacitated tree graph induced by the
wireless network and a corresponding linear program. These will be
needed for the communication scheme achieving the inner bound. This tree
graph construction was introduced first in \cite{nie2}. 

Partition the square $A(n)$ into $4^\ell$ subsquares
$\{A_{\ell,i}(n)\}_{i=1}^{4^{\ell}}$ of sidelength $2^{-\ell}\sqrt{n}$,
and let $V_{\ell,i}(n)$ be the nodes in $A_{\ell,i}(n)$. The integer
parameter $\ell$ varies between $0$ and 
\begin{equation*}
    L(n) \defeq \frac{1}{2}\log(n)\big(1-\log^{-1/2}(n)\big).
\end{equation*}
The partitions at various levels $\ell$ form a dyadic decomposition of
$A(n)$ as illustrated in Fig. \ref{fig:grid}. The choice of $L(n)$ is
made such that with high probability the number of nodes in each set
$V_{L(n),i}$ at the finest grid level is growing to infinity, but not
too quickly. See \cite{nie2} for a detailed discussion.
\begin{figure}[tbp]
    \begin{center}
        \scalebox{0.7}{
        \input{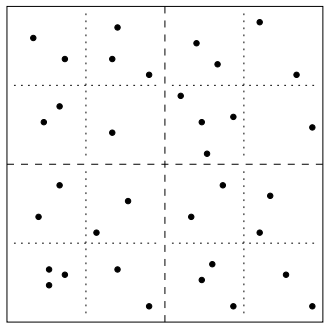}
        }
    \end{center}

    \caption{Subsquares $\{A_{\ell,i}(n)\}$ with $0\leq\ell\leq 2$, i.e.,
    with $L(n)=2$.  The subsquare at level $\ell=0$ is the area $A(n)$
    itself. The subsquares at level $\ell=1$ are indicated by dashed
    lines, the subsquares at level $\ell=2$ by dotted lines. Assume for
    the sake of example that the subsquares are numbered from left to
    right and then from bottom to top (the precise order of numbering is
    immaterial).  Then $V_{0,1}(n)$ are all the nodes $V(n)$,
    $V_{1,1}(n)$ are the nine nodes in the lower left corner (delineated
    by dashed lines), and $V_{2,1}(n)$ are the three nodes in the lower
    left corner (delineated by dotted lines).}

    \label{fig:grid}
\end{figure}

We now construct an undirected, capacitated tree graph $G=(V_G,E_G)$
as depicted in Fig. \ref{fig:grid_graph}.
\begin{figure}[tbp]
    \begin{center}
        \scalebox{0.89}{
        \input{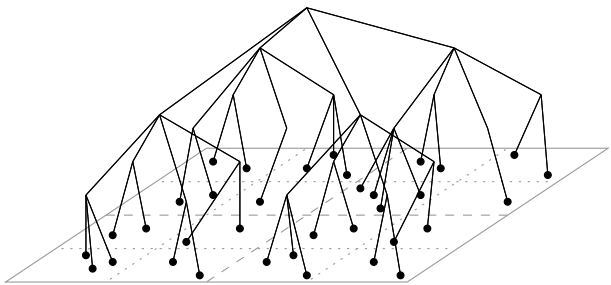}
        }
    \end{center}

    \caption{Construction of the tree graph $G$. We consider the same
    nodes as in Fig. \ref{fig:grid} with $L(n) = 2$. The leaves of $G$
    are the nodes $V(n)$ of the wireless network. They are always at
    level $\ell=L(n)+1$ (i.e., $3$ in this example). At level
    $0\leq\ell\leq L(n)$ in $G$, there are $4^\ell$ nodes. The tree
    structure is induced by the decomposition of $V(n)$ into subsquares
    $\{V_{\ell,i}(n)\}_{\ell,i}$, delineated by dashed and dotted lines.
    Level $0$ contains the root node of $G$.
    }

    \label{fig:grid_graph}
\end{figure}
The vertex set $V_G$ of $G$ consists of the nodes $V(n)$ in the wireless
network plus some additional nodes. The tree $G$ has $L(n)+2$ levels
numbered $0$ to $L(n)+1$: the root node is at level $0$ and leaf nodes
are at level $L(n) + 1$. The leaf nodes of $G$ are the $n$ nodes $V(n)$
in the wireless network. The nodes of $G$ at level $\ell$ with $1\leq
\ell \leq L(n)$ are elements of $V_G\setminus V(n)$ and correspond to
subsets $\{V_{\ell, i}(n)\}_{i=1}^{4^\ell}$ of the nodes $V(n)$ in the
wireless network. The root node of $G$ at level $0$ corresponds to all
the nodes $V(n)$ in the wireless network. A child node at level $\ell+1$
is connected to a parent node at level $\ell$ as follows. For $\ell =
L(n)$, a node $v$ at level $L(n)+1$ (which is a leaf node of $G$ and
hence also an element of the nodes $V(n) \subset V_G$ in the wireless
network) is connected to the node in $G$ corresponding to $V_{L(n),
i}(n)$ if $v$ belongs to $V_{L(n), i}(n)$. For $0\leq \ell < L(n)$, a
node in $G$ at level $\ell + 1$ corresponding to $V_{\ell+1, i}(n)$ is
connected to the node in $G$ corresponding to $V_{\ell, j}(n)$ if
$V_{\ell+1, i}(n)  \subset V_{\ell, j}(n)$. 

Note that through this construction, each set $V_{\ell,i}(n)$ for
$\ell\in\{0,\ldots,L(n)\}$, $i\in 4^\ell$ is represented by exactly one
internal node in $G$. Thus, the cardinality of $V_G$ is
\begin{align}
    \label{eq:cardvg}
    \card{V_G} 
    & = \card{V(n)} +  \sum_{\ell=0}^{L(n)} 4^\ell \nonumber\\
    & = n + \frac{1}{3}\Big(4^{L(n)+1}-1\Big) \nonumber\\
    & \leq 2n.
\end{align}

We assign to each edge $e\in E_G$ at level $\ell$ in $G$ (i.e., between
nodes at levels $\ell$ and $\ell-1$) a capacity
\begin{equation*}
    c_e \defeq
    \begin{cases}
        (4^{-\ell}n)^{2-\min\{3,\alpha\}/2} & \text{if $1\leq\ell\leq L(n)$,} \\
        1 & \text{if $\ell = L(n)+1$.}
    \end{cases}
\end{equation*}
With slight abuse of notation, we let for $(u,v)=e\in E_G$
\begin{equation*}
    c_{u,v} \defeq c_e.
\end{equation*} 

The capacity $c_e$ associated with an edge $e=(u,v)$ is to be
interpreted as follows. Recall that the nodes $u$ and $v$ in $G$
correspond to a subset of nodes in the wireless network. Let nodes $u$
and $v$ in $G$ be at levels $\ell-1$ and $\ell$ with $1\leq \ell\leq
L(n)$.  The corresponding subsets $V_{\ell-1,i}(n)$ and $V_{\ell,j}(n)$
(for some $i$ and $j$) have approximately $4^{-\ell+1}n$ and
$4^{-\ell}n$ nodes with high probability. Assume we could cooperatively
communicate from $V_{\ell-1,i}(n)$ to the nodes $V_{\ell,j}(n)$ in the
wireless network.  This results in a large multiple-input
multiple-output (MIMO) channel with approximately $4^{-\ell+1}n$
transmit and $\tfrac{3}{4}4^{-\ell}n$ receive antennas. The capacity of
this MIMO channel can be evaluated to be approximately
$(4^{-\ell}n)^{2-\min\{3,\alpha\}/2}$. Similarly, for a node $u$ at
level $L(n)+1$, the capacity from $u$ to the set $V_{L(n),i}$ it is
contained in is approximately equal to one. Thus, we see that the edge
capacity $c_e$ is approximately equal to the MIMO capacity between the
subsets in the wireless network corresponding to the nodes in $G$
connected by $e$. 

Recall that the leaf nodes of $G$ are equal to the nodes $V(n)$ in of
the wireless network. Hence, any caching traffic matrix $\lambdaca \in
\Rp^{2^{n}\times n}$ for the wireless network is also a valid traffic
matrix between leave nodes of $G$. Assume the leaf nodes of $G$ request
messages according to the caching traffic matrix $\lambdaca$.
Specifically, we wish to route data from caches in $U \subset V(n)$ to a
node $w \in V(n)$ over $G$ at rate $\lambdaca_{U,w}$. We say that
$\lambdaca$ is \emph{supportable on $G$} if this is possible. Let
$\gLambdaca(n)$ denote the collection of all caching traffic matrices
$\lambdaca \in \Rp^{2^{n}\times n}$ that are supportable on $G$. It can
be verified that $\gLambdaca(n)$ is a closed convex set containing the
origin.  

Given the tree structure of $G$, there is unique path connecting any two
of its nodes. The only way to satisfy the rate demand $\lambdaca_{U, w}$
by routing is to split it amongst different $(u,w)$ pairs with $u\in U$.
Specifically, let $P_{U,w}$ denote the set of $|U|$ unique paths in $G$
between nodes of $U$ and $w$. For a path $p\in P_{U,w}$ between $u\in U$
and $w$, let $f_{p,U}$ be the rate at which demand is routed from node
$u\in U$ to $w$ along path $p$ for request $(U,w)$. A caching traffic
matrix $\lambdaca$ is supportable on the capacitated graph $G$ if and
only if for each of the $2^{n}\times n$ pairs $(U,w)$ there exists a
decomposition 
\begin{equation*}
    \lambdaca_{U,w} = \sum_{p \in P_{U,w}} f_{p,U}
\end{equation*}
so that the resulting load on each edge of $G$ is no more than its
capacity. Formally, consider the following linear program
\begin{equation}\label{eq:computation}
    \begin{array}{lr@{\,}ll}
        \text{max} & \phi & & \\
        \text{s.t.} & {\displaystyle\sum_{p\in P_{U,w}}} f_{p,U} 
        & \geq \phi \lambdaca_{U,w} & \!\!\!\forall U\subset V,w\in V , \\
        & \!\!\!\!\!\!\!\!\!\!\!\!\!\!\!{\displaystyle\sum_{U\subset V} \sum_{w\in V}
        \sum_{\substack{p\in P_{U,w}:\\ e\in p}}} f_{p,U} 
        & \leq c_e & \!\!\!\forall e \in E_{G}, \\
        & f_{p,U} 
        & \geq  0 & \!\!\!\forall U\subset V, w\in V, p\in P_{U,w},
    \end{array}
\end{equation}
with $V=V(n)$, and where the maximization is over the variables $\phi$
and $f_{p,U}$. Denote the maximum value of $\phi$ by $\phi(\lambdaca)$.
The caching traffic matrix $\lambdaca$ is supportable on the graph $G$,
if and only if $\phi(\lambdaca) \geq 1$. 

Note that for any $\lambdaca \in \Rp^{2^{n}\times n}$, the caching
traffic matrix $\phi(\lambdaca)\lambda$ is supportable on $G$, i.e.,
$\phi(\lambdaca)\lambdaca\in \gLambdaca(n)$. Thus,
\begin{equation*}
    \phi(\lambdaca) 
    = \max\Bigl\{ \phi \geq 0 : \phi \lambdaca \in \gLambdaca(n)\Bigr\}.
\end{equation*}
In words, $\phi(\lambda)$ is the largest multiple such that the scaled
traffic matrix $\phi(\lambdaca)\lambdaca$ is supportable on $G$. 
Since $\gLambdaca(n)$ is a closed convex set
containing the origin, knowledge of $\phi(\lambdaca)$ for all
$\lambdaca\in\Rp^{2^n\times n}$ completely specifies $\gLambdaca(n)$. We
can think of  $\phi(\lambdaca)$, evaluated for all $\lambdaca$, as an
equivalent description of the region $\gLambdaca(n)$.

\subsection{Inner Bound}
\label{sec:main_inner}

The first result provides an inner bound for the caching capacity region
$\Lambdaca(n)$ in terms of the set $\gLambdaca(n)$ of supportable
caching traffic matrices over the graph $G$. This result is valid for
all $\alpha > 2$, i.e., for all values of the path-loss exponent
$\alpha$ of interest (excluding the boundary point $\alpha=2$ as
discussed in Section \ref{sec:model}).

For $\lambdaca \in \Rp^{2^{n}\times n}$, define 
\begin{equation*}
    \rhoca(\lambdaca)
    \defeq \max\Bigl\{ \rho \geq 0 : \rho \lambdaca \in \Lambdaca(n)\Bigr\}.
\end{equation*}
In words, $\rhoca(\lambdaca)$ is the largest multiple such that the scaled
traffic matrix $\rhoca(\lambdaca)\lambda$ is achievable over the
wireless network. The caching capacity region $\Lambdaca(n)$ is a closed
convex set containing the origin, and hence $\rhoca(\lambdaca)$ 
is an equivalent description of $\Lambdaca(n)$.

\begin{theorem}
    \label{thm:caching1}
    Under either fast or slow fading, for any $\alpha>2$, there exists
    $b_1(n) \geq n^{-o(1)}$ such that 
    \begin{equation*}
        \rhoca(\lambdaca) \geq b_1(n)\phi(\lambdaca) 
    \end{equation*}
    for all $\lambdaca \in \Rp^{2^{n}\times n}$ 
    with probability $1-o(1)$ as $n\to\infty$.
\end{theorem}

The proof of Theorem \ref{thm:caching1} is provided in Section
\ref{sec:inner_proof}.  We point out that Theorem \ref{thm:caching1}
holds only with probability $1-o(1)$ for different reasons in the fast
and slow fading case. For fast fading, the theorem holds only for node
placements that are ``regular'' enough. A random node placement
satisfies these regularity conditions with high probability as
$n\to\infty$. For slow fading, Theorem \ref{thm:caching1} holds under
the same regularity conditions on the node placement, but additionally
only holds with probability $1-o(1)$ for the realization of the channel
gains.

Given the equivalence of $\rhoca(\lambdaca), \phi(\lambdaca)$ and
$\Lambdaca(n), \gLambdaca(n)$ as mentioned above, Theorem
\ref{thm:caching1} states that $b_1(n) \gLambdaca(n) \subset
\Lambdaca(n)$ with high probability. This links the tree graph $G$ to
the wireless network: Every caching traffic matrix that can be routed
over the graph $G$ can also (up to a small, in the scaling sense,
factor) be transmitted reliably over the wireless network.

The communication scheme achieving the inner bound in Theorem
\ref{thm:caching1} consists of three layers. The lower two layers handle
channel coding and load balancing, and effectively transform the
wireless network into the tree graph $G$. The top layer assigns caches
to destination nodes and routes data over $G$. Thus, this scheme
performs separate source coding (in the top layer) and channel coding
(in the two bottom layers). See Section \ref{sec:main_scheme} for a
detailed description of this communication architecture.

\subsection{Outer Bound} 
\label{sec:main_outer}

The next result provides an outer bound for the caching capacity region
$\Lambdaca(n)$ in terms of the $\gLambdaca(n)$. This result is valid
for $\alpha >6$, i.e., for large path-loss exponents.

\begin{theorem}
    \label{thm:caching2}
    Under either fast or slow fading, for any $\alpha>6$, there exists
    $b_2(n) \leq n^{o(1)}$ such that 
    \begin{equation*}
        \rhoca(\lambdaca) \leq  b_2(n)\phi(\lambdaca)
    \end{equation*}
    for all $\lambdaca \in \Rp^{2^{n}\times n}$ 
    with probability $1-o(1)$ as $n\to\infty$.
\end{theorem}

The proof of Theorem \ref{thm:caching2} is provided in Section
\ref{sec:outer_proof}.  As with Theorem \ref{thm:caching1}, Theorem
\ref{thm:caching2} holds with probability $1-o(1)$ for the realization
of the node placement and, in the slow fading case, the realization of
the channel gains.

Using again the equivalence of $\rhoca(\lambdaca), \phi(\lambdaca)$ and
$\Lambdaca(n), \gLambdaca(n)$, Theorem \ref{thm:caching2} states that
$\Lambdaca(n) \subset b_2(n)\gLambdaca(n)$ with high probability.
Comparing Theorems \ref{thm:caching1} and \ref{thm:caching2}, we see
that, for $\alpha >6$ and with high probability, 
\begin{equation*}
    n^{-o(1)}\phi(\lambdaca)
    \leq \rhoca(\lambdaca) 
    \leq n^{o(1)}\phi(\lambdaca)
\end{equation*}
for all $\lambdaca \in \Rp^{2^{n}\times n}$ or, equivalently, 
\begin{equation*}
    n^{-o(1)}\gLambdaca(n)
    \subset \Lambdaca(n) 
    \subset n^{o(1)}\gLambdaca(n).
\end{equation*}
In other words, for $\alpha>6$, the set of
caching traffic matrices $\gLambdaca(n)$ supportable by routing over the
tree graph $G$ scales as the caching capacity region $\Lambdaca(n)$.
This is illustrated in Fig. \ref{fig:approx}.
\begin{figure}[tbp]
    \begin{center}
        \input{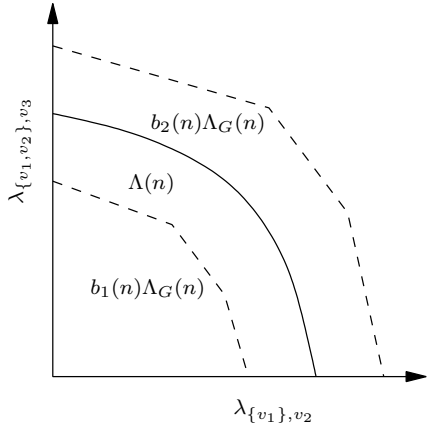}
    \end{center}

    \caption{ 
    For $\alpha >6$, the set $\gLambdaca(n)$ approximates the
    caching capacity region $\Lambdaca(n)$ of the wireless network in
    the sense that $b_1(n)\gLambdaca(n)$ (with $b_1(n)\geq n^{-o(1)}$)
    provides an inner bound to $\Lambdaca(n)$ and $b_2(n)\gLambdaca(n)$
    (with $b_2(n) \leq n^{o(1)}$) provides an outer bound to
    $\Lambdaca(n)$. The figure shows two dimensions (namely
    $\lambdaca_{\{v_1\},v_2}$ and $\lambdaca_{\{v_1,v_2\},v_3}$) of the $2^{n}\times
    n$-dimensional sets $\Lambdaca(n)$ and $\gLambdaca(n)$.  
    }

    \label{fig:approx}
\end{figure}

\subsection{Computational Aspects}
\label{sec:main_comput}

Theorems \ref{thm:caching1} and \ref{thm:caching2} show that, for large
$\alpha$, $\gLambdaca(n) \approx \Lambdaca(n)$. Computationally, the
question of interest is that of membership, i.e., determining if a given
$\lambdaca \in \Rp^{2^{n}\times n}$ belongs to $\Lambdaca(n)$ or,
equivalently, determining if $\rhoca(\lambdaca) \geq 1$. Since
$\rhoca(\lambdaca) \approx \phi(\lambdaca)$, computation of
$\phi(\lambdaca)$ answers the membership question approximately (up to a
multiplicative error of $n^{o(1)}$). 

The linear program \eqref{eq:computation} defining $\phi(\lambdaca)$ can
be solved in polynomial time in the number of its constraints and
variables \cite{kar}. Define 
\begin{equation*}
    \norm{\lambdaca}_0 
    \defeq \card{\{(U,w) : \lambdaca_{U,w} > 0 \}}
\end{equation*}
as the number of $(U,w)$ pairs with positive demand
$\lambdaca_{U,w} > 0$. The number of constraints in the linear program
\eqref{eq:computation} scales linearly in $\card{E_G} +
\norm{\lambdaca}_0$. And the number of variables scales as $n
\norm{\lambdaca}_0$. Noting that $\card{E_G}$ is polynomial in $n$ by
\eqref{eq:cardvg}, this implies that the approximate membership of any
$\lambdaca$ in $\Lambdaca(n)$ can be checked in time polynomial in $n$
and $\norm{\lambdaca}_0$.  

Note that this need not be polynomial in $n$, since $\norm{\lambdaca}_0$
could be exponential in $n$. However, even just to ask the membership
query, one needs to specify $\norm{\lambdaca}_0$ distinct numbers.
Therefore, the above discussion shows that the computational cost of
approximate membership testing takes time polynomial in the effective
problem statement, which is the best one can hope for. Moreover, in many
situations of practical interest, the number of pairs $(U,w)$ with
positive demand can be expected to be only polynomial in the network
size $n$. In these cases, approximate membership can be tested in
polynomial time also in $n$.

\subsection{A Content Delivery Protocol}
\label{sec:main_scheme}

Theorem \ref{thm:caching1} provides an inner bound for the caching
capacity region of a wireless network. We now describe the
communication scheme achieving this inner bound. The matching outer bound
shows that, for $\alpha>6$, this scheme is optimal in the scaling sense.

Our proposed communication scheme consists of three layers, similar to a
protocol stack. From the highest to lowest level of abstraction, these
three layers are the \emph{data layer}, the \emph{cooperation
layer}, and the \emph{physical layer}. 

From the view of the data layer, the wireless network is
treated as the abstract capacitated tree graph $G$, up to a loss of a
factor $b_1(n)$ in the capacity of each link.  Let us assume that
$\frac{1}{b_1(n)}\lambdaca \in \gLambdaca(n)$.  Solve the corresponding
linear program \eqref{eq:computation}, and let $f=(f_{p,U})$ be its
solution.  Since $\frac{1}{b_1(n)}\lambdaca \in \gLambdaca(n)$, routing
traffic according to this solution $f$ allows to support the caching
traffic matrix $\lambdaca$ in this layer.  The next two layers transform
this routing solution $f$ for $\lambdaca$ over the graph $G$ into a
communication strategy for the wireless network. 

The cooperation layer provides this tree graph abstraction to the data
layer. Recall that the leaf nodes of $G$ are the nodes $V(n)$ of the
wireless network and that each internal node of $G$ represents a subset
of nodes $V_{\ell,i}(n)\subset V(n)$ within the subsquare
$A_{\ell,i}(n)$ in the wireless network.  The cooperation layer provides
the tree abstraction $G$ by ensuring that, whenever a message is located
in the data layer at a particular node $v$, the message is evenly 
distributed in the wireless network among the nodes $V_{\ell,i}(n)$
represented by the node $v$. Recall that the sets $\{V_{\ell,i}(n)\}$
are nested and increasing as $\ell$ decreases.  Hence, as a message
travels towards the root node in $G$ in the data layer, it is
distributed over a larger area in the wireless network by the
cooperation layer.  Similarly, as a message travels away from the root
node in $G$ in the data layer, it is concentrated on a smaller area in
the wireless network by the cooperation layer.  Thus, sending a message
up or down an edge in the tree $G$ in the data layer corresponds in the
cooperation layer to distributing or concentrating the same message in
the wireless network (see also Fig.~\ref{fig:layers} below). 

Formally, this distribution and concentrating of messages is performed
as follows. To send a message from a child node to its parent in $G$
(i.e., towards the root node of $G$), the message at the wireless nodes
in $V(n)$ represented by the child node in $G$ is evenly distributed
over the wireless channel among all nodes in $V(n)$ represented by the
parent node in $G$. This distribution is performed by splitting the
message at each node in $V(n)$ represented by the child node in $G$ into
equal sized parts and by transmitting one part to each node in $V(n)$
represented by the parent node in $G$. To send a message from a parent
node to a child node in $G$ (i.e., away from the root node of $G$), the
message at the wireless nodes in $V(n)$ represented by the parent node
in $G$ is concentrated on the wireless nodes in $V(n)$ represented by
the child node in $G$. This concentration is performed be collecting at
each node in $V(n)$ corresponding to the child node in $G$ the message
parts of the previously split up message located at the nodes in $V(n)$
corresponding to the parent node in $G$.

Finally, the physical layer performs this concentration and distribution
of messages induced by the cooperation layer over the physical wireless
channel.  Note that the kind of traffic resulting from the operation of
the distribution or cooperation is highly uniform in the sense that within each
subsquare all nodes receive data at the same rate. Uniform traffic of
this sort is well understood. Depending on the path-loss exponent
$\alpha$, we use either hierarchical cooperation \cite{nie,ozg} (for
$\alpha\in(2,3]$) or multi-hop communication (for $\alpha>3$). It is
this operation of each edge in the physical layer that determines the
edge capacity of the graph $G$ as seen from the data layer. 

Note that the value of the path-loss exponent $\alpha$ only
significantly affects the operation of the physical layer. The
cooperation layer is completely invariant under changes in $\alpha$, and
the data layer is only affected through the value of the edge
capacities of the graph $G$.  In particular, even when $\alpha>3$ so
that the physical layer performs multi-hop communication, the
construction of the tree structure $G$ is still necessary. In fact, the
role of routing over $G$ can be understood as load balancing of traffic,
which is required no matter how the physical layer operates.

We point out that this scheme respects source-channel separation. In
fact, source coding is only performed at the data layer (through the
selection of message parts from the various available caches). Channel
coding is only performed in the cooperation and physical layers.

The next example illustrates the operation of this scheme.  For more
details on this architecture, see \cite{nie2}.
\begin{example}
    Consider the three layers of the proposed communication architecture
    depicted in Fig. \ref{fig:layers}. From top to bottom in the figure,
    these are the data layer, the cooperation layer, and the physical
    layer. In this example, we consider a single $(U,w)$ pair.  The set
    of caches $U$ consists of two nodes $\{u_1, u_2\}$ in the wireless
    network shown at the bottom left, and the corresponding destination
    $w$ is in the top right of the network. 
    
    At the data layer, traffic is balanced by choosing which fraction of
    the message requested at $w$ and available at $U$ is delivered from
    each node $u_1$ and $u_2$ in $U$. This load balancing is performed
    by solving the linear program \eqref{eq:computation}. In this simple
    example, a reasonable choice is to deliver half the message from
    $u_1$ and half from $u_2$. The routes between $\{u_1, u_2\}$ and $w$
    chosen at the data layer are indicated in black dashed lines. 
    
    Consider now the second edge along the path in $G$ from $u_1$ to $w$
    labeled by $e=(v_2,v_1)$ in the figure.  The middle plane in the
    figure shows the induced behavior in the cooperation layer from
    using this edge in the data layer. Note that $v_2$ and $v_1$ are not
    leaf nodes of $G$, and hence correspond to subsets of $V(n)$ through
    the construction of $G$. Let $V_{2,i}(n)$ and $V_{1,j}(n)$ be the
    subsets of $V(n)$ corresponding to $v_2$, and $v_1$, respectively.
    Since $v_2$ is a child node of $v_1$, we must have $V_{2,j}\subset
    V_{1,i}$. When a message is present at $v_2$ in the data layer, it
    is distributed evenly over the three nodes in $V_{2,i}(n)$ in the
    cooperation layer; in other words, each of the three nodes in
    $V_{2,i}(n)$ has access to a distinct third of the original message.
    To send the message over edge $e$ from $v_2$ to $v_1$ in the data
    layer, the cooperation layer splits the message part at each node in
    $V_{2,i}(n)$ into smaller parts and distributes these subparts
    evenly over the nodes in $V_{1,j}(n)$. Thus, when the message
    reaches $v_1$ in the data layer, each of the nine nodes in
    $V_{1,j}(n)$ has access to a distinct ninth of the original message
    in the cooperation layer. 

    The bottom plane in the figure shows part of the corresponding
    actions induced in the physical layer. The distribution of message
    parts from $V_{2,i}(n)$ to $V_{1,j}(n)$ is properly scheduled to
    minimize interference, and channel coding is performed. The precise
    nature of the operation of this layer depends on the path-loss
    exponent $\alpha$, as explained above.

    \begin{figure}[!ht]
        \begin{center}
            \scalebox{0.89}{
            \input{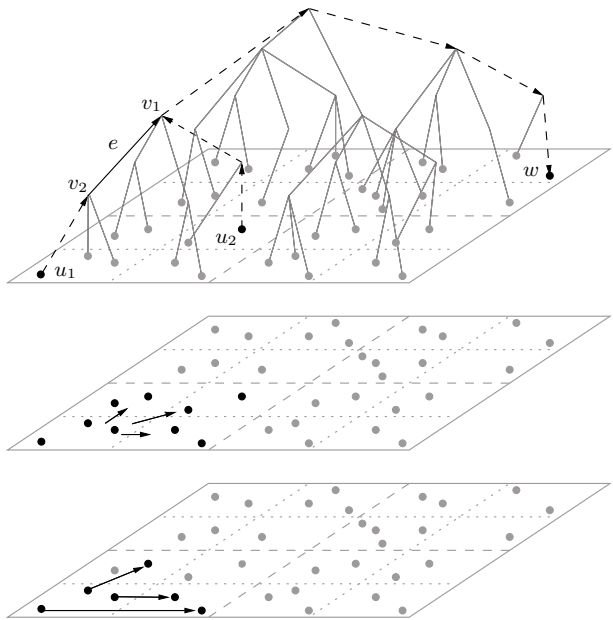}
            }
        \end{center}

        \caption{Example operation of the three-layer architecture. 
        A message available at the caches $U=\{u_1,u_2\}$ is requested at
        the destination node $w$. The figure shows the induced actions
        by this request in the data layer (top plane), cooperation layer
        (middle plane), and physical layer (bottom plane).}

        \label{fig:layers}
    \end{figure}
\end{example}

\subsection{Example Scenarios}
\label{sec:main_examples}

We provide two examples illustrating various aspects of the
caching capacity region. 
Example \ref{eg:nearest} shows that the strategy of always selecting the
nearest cache can be arbitrarily bad. Example \ref{eg:complete}
illustrates the potential benefit of caching on achievable rates 
in the wireless network.

\begin{example}
    \label{eg:nearest}
    ({\em Nearest-neighbor cache selection})

    A simple and intuitive strategy for selecting caches is to request
    the entire message from the nearest available cache. In fact, this
    is the strategy implicitly assumed in most of the prior work
    on caching in wireless networks cited in Section
    \ref{sec:prior}. This example shows that this strategy can be
    arbitrarily bad.

    We consider the scenario illustrated in Fig. \ref{fig:nearest}. 
    Assume $V_{2,1}(n)$ and $V_{2,3}(n)$ are subsets of $V_{1,1}(n)$,
    and $V_{2,16}(n)$ is a subset of $V_{1,4}(n)$.  Consider a node
    $u^\star\in V_{2,3}(n)$ geographically close to $V_{2,1}(n)$, and label
    the nodes in $V_{2,1}(n) = \{w_1,w_2,\ldots\}$ and in $V_{2,16}(n) =
    \{u_1,u_2,\ldots\}$.  Construct the traffic matrix
    \begin{equation*}
        \lambdaca_{U,w} \defeq
        \begin{cases}
            1 & \text{if $U=\{u^\star,u_i\}, w = w_i$ for some $i$,} \\
            0 & \text{otherwise.}
        \end{cases}
    \end{equation*}
    In words, each node in $w_i\in V_{2,1}(n)$
    requests a message available at a dedicated cache $u_i\in V_{2,16}$
    and at a shared cache $u^\star\in V_{2,3}$. We want to determine $\rho(\lambdaca)$, the
    largest multiple of $\lambdaca$ such that the resulting traffic matrix is
    achievable in the wireless network. In this setting with unit
    demands, $\rho(\lambdaca)$ can also be interpreted as the largest
    uniformly achievable per-node rate.
    \begin{figure}[!ht]
        \begin{center}
            \scalebox{0.8}{
            \input{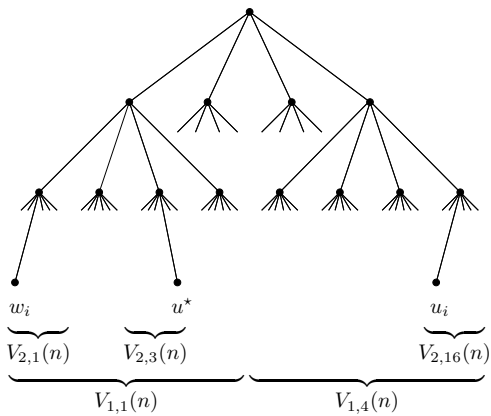}
            }
        \end{center}

        \caption{Caching traffic pattern for Example \ref{eg:nearest}.
        Each destination node $w_i\in V_{2,1}(n)$ requests a message 
        available at a dedicated cache $u_i\in V_{2,16}(n)$ and at a 
        shared cache $u^\star\in V_{2,3}(n)$.
        }

        \label{fig:nearest}
    \end{figure}

    For every destination node $w_i$, the nearest cache (both in terms
    of geographic as well as graph distance) is $u^\star$. Assume each
    node $w_i$ requests the entire message from its nearest cache
    $u^\star$. It is easy to show that each node in the wireless
    network, and, in particular, node $u^\star$, can reliably transmit
    information at a sum rate of at most $n^{o(1)}$.  With high
    probability there will be $\Theta(n)$ nodes in $V_{2,1}(n)$
    requesting a message at equal rate from $u^\star$.  Hence this
    strategy achieves a per-node rate of at most $n^{-1+o(1)}$
    regardless of the value of the path-loss exponent $\alpha >2$. 

    Assume now each $w_i$ uses only the more distant cache $u_i$. The
    routes from $u_i$ to $w_i$ for different values of $i$ intersect
    only at the four edges closest to the root node of $G$. These four
    edges have a capacity of order $\Theta(n^{2-\min\{3,\alpha\}/2})$,
    and hence it can be seen that over the graph $G$ these messages can
    be routed at a per-node rate of $\Theta(n^{1-\min\{3,\alpha\}/2})$. 
    Together with Theorem \ref{thm:caching1}, this shows that
    \begin{equation*}
        \rho(\lambdaca) 
        \geq n^{1-\min\{3,\alpha\}/2-o(1)}
        \gg n^{-1+o(1)}
    \end{equation*}
    is achievable in the wireless network with high probability. For
    this simple example, it is easily checked that this strategy is
    order-optimal for routing over the graph $G$. Together with Theorem
    \ref{thm:caching2}, this confirms that, for $\alpha >6$, no scheme
    can achieve a better scaling in the wireless network. Hence
    \begin{equation*}
        \rho(\lambdaca) = n^{1-\min\{3,\alpha\}/2\pm o(1)}
    \end{equation*}
    for $\alpha > 6$.\footnote{The notation $n^{\pm o(1)}$ is used to
    indicate that $n^{o(1)}$ is an upper and $n^{-o(1)}$ is a lower bound.} 
    With some additional work, it can be shown that
    this is the correct scaling of $\rho(\lambdaca)$ also for $\alpha\in
    (2,6]$. This shows that the strategy of always selecting the nearest
    cache can result in a scaling exponent that is considerably worse
    than what is achievable with optimal cache selection.
\end{example}

\begin{example}
    \label{eg:complete}
    ({\em Complete caches})

    Assume we randomly pick  $n^\beta$ caches for $\beta\in(0,1)$, each
    holding a complete copy of all the messages. More precisely, letting
    $U^\star = \{u_i\}_{i=1}^{n^\beta}$ be the collection of
    caches, we consider a caching traffic matrix
    $\lambdaca\in\Rp^{2^n\times n}$ of the form 
    \begin{equation*}
        \lambdaca_{U,w} =
        \begin{cases}
            1 & \text{if $U=U^\star$,} \\
            0 & \text{otherwise,}
        \end{cases}
    \end{equation*}
    for every $(U,w)$. In other words, every node $w\in V(n)$ requests
    a message that is available at a common set of caches $U^\star$.
    As before, $\rho(\lambdaca)$ can in this setting with uniform
    demands be interpreted as the largest uniformly achievable per-node
    rate.
    
    Assume every node chooses the nearest cache (as discussed in Example
    \ref{eg:nearest}). With high probability, there will be
    $\Theta(n^{1-\beta})$ nodes accessing the same cache. The bottleneck
    limiting the flows from this cache to the destination nodes is the
    edge with capacity one connecting the cache to the tree.
    Hence, with this strategy, we can achieve a per-node rate of
    $\Theta(n^{\beta-1})$ over the graph $G$ with high probability.
    By Theorem \ref{thm:caching1}, this implies that
    a per-node rate of 
    \begin{equation*}
        \rho(\lambdaca) \geq n^{\beta-1-o(1)}
    \end{equation*}
    is achievable with probability $1-o(1)$ as $n\to\infty$ in the
    wireless network.  A short calculation reveals that this is an
    order-optimal routing strategy over $G$, which, by Theorem
    \ref{thm:caching2}, shows that 
    \begin{equation*}
        \rho(\lambdaca) \leq n^{\beta-1+o(1)}
    \end{equation*}
    for $\alpha > 6$.  Hence, for $\alpha >6$, 
    \begin{equation*}
        \rho(\lambdaca) = n^{\beta-1\pm o(1)}.
    \end{equation*}
    Moreover, it can be shown that this is the correct scaling of
    $\rho(\lambdaca)$ also for $\alpha\in (2,6]$.

    This example illustrates that in situations in which the traffic
    demand and location of caches are regular enough, the strategy of
    selecting the nearest cache (as analyzed also in Example
    \ref{eg:nearest}, and which is shown there to be arbitrarily bad in
    general) can actually be close to optimal.
\end{example}

\section{Proofs}
\label{sec:proofs}

In Section \ref{sec:inner_proof}, we provide the proof of the inner
bound in Theorem \ref{thm:caching1}. The proof relies on the
communication scheme presented earlier in Section \ref{sec:main_scheme}.
The outer bound in Theorem \ref{thm:caching2} is proved in Section
\ref{sec:outer_proof}. It consists of two key steps, summarized by
Lemmas \ref{thm:one} and \ref{thm:two} below. The first step is
information-theoretic, outer bounding the caching capacity region in
terms of cuts in the wireless network and then relating these cuts to
cuts in the graph $G$. The details of this first step are provided in
Section \ref{sec:outer_one}. The second step relates these cuts in the
graph $G$ to supportable flows over $G$. The details of this second step
are provided in Section \ref{sec:outer_two}.

\subsection{Proof of Theorem \ref{thm:caching1} (Inner Bound)}
\label{sec:inner_proof}

We wish to show that $\rho(\lambdaca) \geq b_1(n)\phi(\lambdaca)$ for
some $b_1(n) \geq n^{-o(1)}$ uniform in $\lambdaca$.  Equivalently, we
will argue that $\lambdaca \in \gLambdaca(n)$ implies $b_1(n) \lambdaca
\in \Lambdaca(n)$. Assume $\lambdaca \in \gLambdaca(n)$; then
$\phi(\lambdaca)\geq 1$. Let 
\begin{equation*}
    f\defeq (f_{p,U})
\end{equation*}
be the corresponding solution of the linear program
\eqref{eq:computation}. By definition of \eqref{eq:computation}, the
load induced by $f$ on each edge of $G$ is no more than its
capacity. 

We now use this solution $f$ to construct a \emph{unicast} traffic
matrix solving the caching problem. Formally, a \emph{unicast traffic
matrix} is an element $\lambdauc\in\Rp^{n\times n}$ associating with
each source-destination pair $(u,w)\in V(n)\times V(n)$ the rate
$\lambdauc_{u,w}$ at which destination node $w$ requests data from
source node $u$. The \emph{unicast capacity region}
$\Lambda(n)\subset\Rp^{n\times n}$ is the closure of the collection of
all achievable unicast traffic matrices in the wireless network. 
In analogy to caching traffic, every unicast traffic matrix $\lambdauc$
for the wireless network induces a unicast traffic matrix between the
leaf nodes of the graph $G$, and we can define $\gLambdauc(n)$ as the
collection of unicast traffic matrices that can be routed (i.e., are
supportable) over $G$. 

Consider again the flows $f$ as defined above. Construct the
unicast traffic matrix $\lambdauc = \lambdauc(f)$ as
\begin{equation*}
    \lambdauc_{u,w} \defeq \sum_{\substack{U\subset V(n): \\ u\in U}} f_{p_{u,w},U},
\end{equation*}
where $p_{u,w}$ is the unique path in the tree graph $G$ between $u$ and $w$.
In words, $\lambdauc_{u,w}$ is the sum of the flows $f_{p_{u,w},U}$ for the
caching problem from $u$ to $w$. The load induced by this unicast
traffic $\lambdauc(f)$ on the edges of $G$ is the same as that due to
$f$. In particular, the total demand of $\lambdauc(f)$ across each
edge is at most its capacity. Since $G$ is a tree, this implies that
$\lambdauc(f)$ is supportable over $G$, i.e.,
$\lambdauc(f)\in\gLambdauc(n)$.

We have thus transformed the problem of routing \emph{caching} traffic
over $G$ into one of routing \emph{unicast} traffic over $G$. The
following result, established in \cite{nie2}, links the set of
supportable unicast traffic matrices $\gLambdauc(n)$ over $G$ to the
unicast capacity region $\Lambdauc(n)$ of the wireless network.
\begin{proposition}
    \label{prop:ngs}
    Under either fast or slow fading, for any $\alpha>2$,
    there exists $b_1'(n) \geq n^{-o(1)}$ such that
    \begin{equation*}
        b_1'(n) \gLambdauc(n) \subset \Lambdauc(n)
    \end{equation*}
    with probability $1-o(1)$ as $n\to\infty$.
\end{proposition}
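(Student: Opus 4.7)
The plan is to prove the proposition by implementing the three-layer architecture described in Section \ref{sec:main_scheme} for unicast traffic, and then accounting for the rate losses incurred in each layer. Since $\lambdauc \in \gLambdauc(n)$, there exists a routing $f = \{f_{p_{u,w}}\}$ that supports $\lambdauc$ on $G$ with edge loads below $c_e$. I will turn this routing into an actual wireless transmission scheme, where each edge of $G$ is emulated by a cooperative MIMO ``sub-channel'' between the two subsets of $V(n)$ corresponding to its endpoints, and where the resulting traffic pattern at each level is scheduled to control interference.

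The key steps, in order, are as follows. First, I would establish a \emph{regularity} event for the random node placement, guaranteeing that each $V_{\ell,i}(n)$ with $0\leq \ell\leq L(n)$ contains $\Theta(4^{-\ell}n)$ nodes and that $V_{L(n),i}(n)$ grows to infinity but not too fast; by the choice of $L(n)$ this event has probability $1-o(1)$. Conditional on regularity, I would next lower-bound, up to $n^{o(1)}$ factors, the MIMO capacity from $V_{\ell-1,\cdot}(n)$ to $V_{\ell,\cdot}(n)$ by $c_e = (4^{-\ell}n)^{2-\min\{3,\alpha\}/2}$; for large $\alpha$ this uses a multi-hop argument, while for $\alpha\in(2,3]$ it requires the hierarchical cooperation scheme of \cite{nie,ozg}. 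Third, I would schedule the edges at each level $\ell$ of $G$ via a $4^\ell$-TDMA partition (or an equivalent spatial reuse pattern) so that simultaneously active MIMO transmissions at a given level see only bounded out-of-cluster interference under the $\alpha > 2$ assumption; this scheduling contributes only a constant factor loss per level, and summing over the $O(\log n)$ levels gives a $n^{o(1)}$ total loss. Fourth, I would invoke the cooperation layer to split/concatenate messages uniformly across the subsets $V_{\ell,i}(n)$, so that each flow $f_{p_{u,w}}$ on an edge $e$ at level $\ell$ of $G$ is realized as a MIMO transmission between the two endpoint subsets at rate $c_e / n^{o(1)}$ per unit of $f_{p_{u,w}}$.

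The main obstacle is the physical layer in the low path-loss regime $\alpha\in(2,3]$, where hierarchical cooperation must be invoked recursively inside each subsquare while we are already running a cooperative scheme between subsquares. Controlling the accumulated $n^{o(1)}$ losses through the recursion, and simultaneously verifying that the interference seen by one MIMO block from other active blocks at the same level remains bounded, is the delicate point; this is exactly the content of the analysis in \cite{nie2}, which I would quote (having verified its hypotheses via the regularity event above) rather than redo from scratch. In the slow fading case, an additional almost-sure argument over the channel phases $\{\theta_{u,v}\}$ is needed to ensure the MIMO rate lower bounds hold for almost every channel realization.

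Finally, concatenating these three layers, the routing $f$ supporting $\lambdauc$ on $G$ induces a wireless communication scheme supporting $b_1'(n)\lambdauc$ with $b_1'(n) = n^{-o(1)}$ absorbing the cooperation-layer splitting factor, the TDMA factor, and the MIMO gap. Taking closure over all $\lambdauc\in\gLambdauc(n)$ yields $b_1'(n)\gLambdauc(n)\subset\Lambdauc(n)$ on the good event, completing the proof.
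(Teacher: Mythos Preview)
Your proposal is correct and aligns with the paper's approach: the paper's proof of this proposition is simply a citation to \cite[Lemma 10]{nie2}, and your sketch is precisely an outline of the three-layer construction carried out there, with the same ultimate appeal to \cite{nie2} for the delicate physical-layer analysis. One small imprecision: a full $4^\ell$-TDMA at level $\ell$ would cost a factor $4^\ell$, not a constant; what you actually want (and what \cite{nie2} does) is a bounded spatial-reuse schedule at each level, which your parenthetical already hints at.
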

\begin{IEEEproof}
    See \cite[Lemma 10]{nie2}.
\end{IEEEproof}
Proposition \ref{prop:ngs} is established by means of an explicit
communication architecture, consisting of the three layers (data layer,
cooperation layer, physical layer) as described in detail in Section
\ref{sec:main_scheme}.

Proposition \ref{prop:ngs} implies that $b_1'(n) \lambdauc(f) \in
\Lambdauc(n)$. Given that the unicast traffic matrix $\lambdauc(f)$
was created through decomposing the caching traffic matrix $\lambdaca$,
it follows that $b_1'(n)\lambdaca$ can be supported using these unicast
transmissions over the wireless network. That is, $b_1(n)\lambdaca \in
\Lambdaca(n)$ for 
\begin{equation*}
    b_1(n) \defeq b_1'(n) \geq n^{-o(1)}.
\end{equation*}
This shows that
\begin{equation*}
    b_1(n)\gLambdaca(n) \subset \Lambdaca(n),
\end{equation*}
completing the proof of Theorem \ref{thm:caching1}.\hfill\IEEEQED

\subsection{Proof of Theorem \ref{thm:caching2} (Outer Bound)}
\label{sec:outer_proof}

We aim to show that 
\begin{equation*}
    \rho(\lambda)\leq b_2(n)\phi(\lambda)
\end{equation*}
for some $b_2(n) \leq n^{o(1)}$ uniform in $\lambda$. The proof proceeds
in two steps. First, we relate achievable traffic in the wireless
network (characterized by $\rho(\lambda)$) to cuts in the graph $G$
(characterized by $\hat{\rho}(\lambda)$ defined below). Second, we
relate these cuts in $G$ to supportable flows over $G$ (characterized by
$\phi(\lambda)$).

Define 
\begin{align*}
    \hLambdaca(n) \defeq 
    \bigg\{ \lambdaca\in\Rp^{2^n\times n}: 
    \sum_{U\subset S\cap V}\sum_{w\in V\setminus S} \lambdaca_{U,w} 
    \leq \sum_{\substack{(u,v)\in E_G: \\ u\in S, v\notin S}} c_{u,v} 
    \ \forall S\subset V_G \bigg\}, 
\end{align*}
with $V=V(n)$ and $V_G=V_G(n)$. Furthermore, let, for any caching
traffic matrix $\lambdaca\in\Rp^{2^n\times n}$,
\begin{equation}
    \label{eq:rhohatdef}
    \hat{\rho}(\lambdaca) 
    \defeq \max\big\{\rho\geq 0: \rho\lambdaca \in \hLambdaca(n)\big\}.
\end{equation}
The set $\hLambdaca(n)$ corresponds to the restrictions on 
the set of supportable caching traffic matrices on the graph $G$ by
all possible cuts $S$ in $V_G(n)$. Consider one such cut $S\subset
V_G(n)$. For any caching traffic matrix $\lambdaca$ that can be routed
over $G$, the total flow
\begin{equation*}
    \sum_{U\subset S\cap V(n)}\sum_{w\in V(n)\setminus S} \lambdaca_{U,w}
\end{equation*}
across this cut can not be larger than the capacity of the cut
\begin{equation*}
    \sum_{\substack{(u,v)\in E_G: \\ u\in S, v\notin S}} c_{u,v}.
\end{equation*}
The region $\hLambdaca(n)$ is the set of caching traffic matrices
satisfying all these constraints. The scalar $\hat{\rho}(\lambdaca)$
yields an equivalent description of $\hLambdaca(n)$. Note that we can rewrite
the definition of $\hat{\rho}(\lambdaca)$ as
\begin{equation}
    \label{eq:rhohat}
    \hat{\rho}(\lambdaca) 
    = \min_{S\subset V_G(n)} 
    \frac{\sum_{\substack{(u,v)\in E_G: \\ u\in S,v\notin S}} c_{u,v}}
    {\sum_{U\subset S\cap V(n)}\sum_{w\in V(n)\setminus S} \lambdaca_{U,w}}.
\end{equation}

Recall that $\gLambdaca(n)$ is the set of supportable caching traffic
matrices on $G$, and that $\phi(\lambdaca)$ is its equivalent description.
From the discussion in the last paragraph, it is clear that
$\gLambdaca(n)\subset\hLambdaca(n)$, or, equivalently, that
$\phi(\lambdaca) \leq \hat{\rho}(\lambdaca)$. The next lemma shows that
$\hat{\rho}(\lambdaca)$ is also an approximate upper bound on the
equivalent description $\rho(\lambdaca)$ of the caching capacity region
$\Lambdaca(n)$ of the wireless network.  

\begin{lemma}
    \label{thm:one}
    Under either fast or slow fading, for any $\alpha > 6$, there exists
    $b_3(n) \leq n^{o(1)}$ such that
    \begin{equation*}
        \rho(\lambdaca)
        \leq b_3(n) \hat{\rho}(\lambdaca)
    \end{equation*}
    for all caching traffic matrices $\lambdaca\in\Rp^{2^n\times n}$
    with probability $1-o(1)$ as $n\to\infty$.
\end{lemma}
The proof of Lemma \ref{thm:one} is presented in Section
\ref{sec:outer_one}.

Lemma \ref{thm:one} shows that, for $\alpha > 6$, $\Lambdaca(n)\subset
b_3(n)\hLambdaca(n)$. This implication is much less obvious than the
statement $\gLambdaca(n)\subset\hLambdaca(n)$. The proof of Lemma
\ref{thm:one} first uses the information-theoretic cut-set bound to
upper bound achievable rates for caching traffic by cuts in the wireless
network and then relates these cuts in the wireless network to cuts in
the graph $G$. We point out that it is this step that limits the
applicability of the outer bound in Theorem \ref{thm:caching2} to large
path-loss exponents $\alpha > 6$. The reason for this is that evaluation
of the cut-set bound for the wireless network for small path-loss
exponents is quite difficult. While it is known how to evaluate
``rectangular'' cuts for small $\alpha$ \cite{ozg}, these techniques do
not extend to the arbitrary cuts that are required for the analysis of
caching traffic. 

Lemma \ref{thm:one} allows us to upper bound the equivalent description
$\rho(\lambdaca)$ of the caching capacity region $\Lambdaca(n)$ by the
equivalent description $\hat{\rho}(\lambdaca)$ of the set $\hLambdaca(n)$ of
caching traffic matrices satisfying all cut constraints in the graph
$G$. We now show that $\hat{\rho}(\lambdaca)$ can be upper bounded by
the equivalent description $\phi(\lambdaca)$ of the set $\gLambdaca(n)$ of
supportable caching traffic matrices on $G$. 

\begin{lemma}\label{thm:two}
    For any $\alpha>2$, there exists $b_4(n) \geq n^{-o(1)}$ such that
    \begin{equation*}
        b_4(n) \hat{\rho}({\lambdaca}) \leq \phi(\lambdaca)
    \end{equation*}
    for all caching traffic matrices $\lambdaca\in\Rp^{2^n\times n}$.
\end{lemma}
The proof of Lemma \ref{thm:two} is presented in Section
\ref{sec:outer_two}.

Lemma \ref{thm:two} shows that, for any $\alpha > 2$,
$b_4(n)\hLambdaca(n)\subset \gLambdaca(n)$. From the above discussion,
we already know that $\gLambdaca(n)\subset \hLambdaca(n)$. Hence, we
deduce from Lemma \ref{thm:two} that
$\gLambdaca(n)\approx\hLambdaca(n)$. This can be understood as an
approximate max-flow min-cut result for caching traffic on undirected
capacitated graphs. Lemma \ref{thm:two} is, in fact, valid for any tree
graph $G$ (with mild assumptions on the edge capacities, see the proof
for the details) and might be of independent interest. 

Combining Lemmas \ref{thm:one} and \ref{thm:two} shows that, for any
$\alpha > 6$, 
\begin{align*}
    \rho(\lambdaca) 
    & \leq b_3(n)\hat{\rho}(\lambda) \\
    & \leq \frac{b_3(n)}{b_4(n)} \phi(\lambdaca).
\end{align*}
Setting
\begin{equation*}
    b_2(n) \defeq b_3(n)/b_4(n) \leq n^{o(1)},
\end{equation*}
and noting that $b_2(n)$ is uniform in $\lambdaca$, concludes the proof of Theorem
\ref{thm:caching2}.

\subsection{Proof of Lemma \ref{thm:one}}
\label{sec:outer_one}

We start with several auxiliary results.  We first introduce some
regularity conditions that are satisfied with high probability by a
random node placement. Define $\mc{V}(n)$ to be the collection of all
node placements $V(n)$ that satisfy the following conditions:
\begin{equation*}
    \begin{array}{r@{\,}ll}
        r_{u,v} & > n^{-1} 
        & \text{ for all $u,v\in V(n), u\neq v$,} \\
        \card{V_{\ell,i}(n)} & \leq \log(n)  
        & \text{ for $\ell= \frac{1}{2}\log(n)$ and all $i\in\{1,\ldots, 4^\ell\}$,} \\
        \card{V_{\ell,i}(n)} & \geq 1 
        & \text{ for $\ell=\frac{1}{2}\log\big(\frac{n}{2\log(n)}\big)$ 
        and all $i\in\{1,\ldots, 4^\ell\}$,} \\
        \card{V_{\ell,i}(n)} & \in [4^{-\ell-1}n,4^{-\ell+1}n]
        & \text{ for all }
        \ell \in\big\{1,\ldots,\frac{1}{2}\log(n)\big(1-\log^{-5/6}(n)\big)\big\},
        i\in\{1,\ldots,4^\ell\}.
    \end{array}
\end{equation*}
The first condition is that the minimum distance between node pairs is
not too small. The second condition is that all squares of area $1$
contain at most $\log(n)$ nodes. The third condition is that all squares
of area $2\log(n)$ contain at least one node. The fourth condition is
that all squares up to level
$\frac{1}{2}\log(n)\big(1-\log^{-5/6}(n)\big)$ contain a number of nodes
proportional to their area. 

The next lemma, quoted from \cite{nie2}, states that a random node
placement satisfies these conditions with high probability.
\begin{lemma}
    \label{thm:mcv}
    \begin{equation*}
        \Pp\big(V(n)\in\mc{V}(n)\big) \geq 1-o(1)
    \end{equation*}
    as $n\to\infty$.
\end{lemma}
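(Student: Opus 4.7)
The plan is to verify each of the four regularity conditions separately and then combine via a union bound. Each condition is a standard concentration/geometric statement about $n$ points placed independently and uniformly at random on $A(n)=[0,\sqrt{n}]^2$, and in each case I would show the condition fails with probability $o(1)$.

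For the minimum-distance condition $r_{u,v}>n^{-1}$, I would use a pairwise union bound: for any fixed pair $u,v$, the probability that $r_{u,v}\leq n^{-1}$ is at most $\pi n^{-2}/n$ (disc of radius $n^{-1}$ over total area $n$). Summing over $\binom{n}{2}$ pairs gives a bound of order $n^{-1}=o(1)$. For condition 2 (squares of area $1$ containing at most $\log(n)$ nodes), the number of nodes in each such subsquare is Binomial$(n,1/n)$ with mean $1$; a Chernoff bound gives a failure probability of order $n^{-c\log\log n}$ per square, and there are $n$ subsquares, so the union bound still yields $o(1)$. For condition 3 (squares of area $2\log(n)$ containing at least one node), the number of nodes in such a subsquare is Binomial$(n,2\log(n)/n)$; the probability of emptiness is $(1-2\log(n)/n)^n\leq e^{-2\log n}=n^{-2}$, and the number of subsquares is $n/(2\log n)$, so again the union bound gives $o(1)$.

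The main technical step, and the one I would expect to be the bottleneck, is condition 4: uniform concentration of $|V_{\ell,i}(n)|$ around its mean $4^{-\ell}n$ over all levels $\ell$ up to $\frac{1}{2}\log(n)\bigl(1-\log^{-5/6}(n)\bigr)$. At these near-boundary levels the mean $4^{-\ell}n$ is as small as $n^{\log^{-5/6}(n)}$, which tends to infinity but only very slowly, so the Chernoff tail needs to be sharp. For each fixed $(\ell,i)$, $|V_{\ell,i}(n)|$ is Binomial$(n,4^{-\ell})$, and a multiplicative Chernoff bound yields
\begin{equation*}
    \Pp\bigl(|V_{\ell,i}(n)|\notin[4^{-\ell-1}n,\,4^{-\ell+1}n]\bigr)
    \leq 2\exp\bigl(-c\cdot 4^{-\ell}n\bigr)
\end{equation*}
for some universal constant $c>0$. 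The number of pairs $(\ell,i)$ is at most $\sum_{\ell}4^\ell\leq n$, so it suffices to show that $\exp(-c\cdot 4^{-\ell}n)$ decays faster than $1/n^2$ uniformly over the relevant range of $\ell$. At the worst level $\ell=\frac{1}{2}\log(n)(1-\log^{-5/6}(n))$ we have $4^{-\ell}n = n^{\log^{-5/6}(n)}$, and $\exp(-c\,n^{\log^{-5/6}(n)})$ is indeed $o(n^{-2})$ since $n^{\log^{-5/6}(n)}=2^{\log^{1/6}(n)}$ grows super-logarithmically in $n$. A union bound over all valid $(\ell,i)$ then gives $o(1)$.

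Finally, I would combine the four failure probabilities by a union bound, yielding $\Pp(V(n)\notin\mc{V}(n))=o(1)$, which is exactly the claim of Lemma~\ref{thm:mcv}. The reference \cite{nie2} establishes the same four conditions, so I would appeal to it for the detailed constants; the only care needed is matching the level cutoff in condition 4, which is why the exponent $\log^{-5/6}(n)$ appears: it is chosen large enough that the Chernoff bound beats the $4^\ell\leq n$ union-bound factor, but small enough that the level still extends essentially to the finest meaningful scale.
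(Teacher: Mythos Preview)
Your proposal is correct. The paper does not actually prove this lemma in-line; its entire proof is the single sentence ``See \cite[Lemma 5]{nie2}.'' Your four-condition verification via pairwise union bound, Chernoff, emptiness probability, and multiplicative Chernoff is exactly the standard argument that underlies that cited result, so you have effectively reconstructed what the reference contains rather than taken a different route. One minor remark: in condition~3 you write $e^{-2\log n}=n^{-2}$, which is only literally true if $\log=\ln$; in the paper's convention $\log$ is base~$2$, so the bound is $n^{-2/\ln 2}$, but this is even smaller and the conclusion is unaffected.
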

\begin{IEEEproof}
    See \cite[Lemma 5]{nie2}.
\end{IEEEproof}

We continue with results upper bounding the MIMO capacity between
subsets of nodes in $V(n)$. Formally, for disjoint subsets $S_1,S_2\subset
V(n)$, denote by $C(S_1,S_2)$ the MIMO capacity between the nodes in
$S_1$ and $S_2$.
Let
\begin{equation*}
    \bm{H}_{S_1,S_2} \defeq (h_{u,v})_{u\in S_1,v\in S_2}
\end{equation*}
be the matrix of channel gains between the nodes in $S_1$ and $S_2$.
Under fast fading,
\begin{equation*}
    C(S_1,S_2) \defeq 
    \max
    \E\Big( \log \det\big(\bm{I}+\bm{H}_{S_1,S_2}^\dagger\bm{Q}(\bm{H}) 
    \bm{H}_{S_1,S_2}\big)\Big),
\end{equation*}
where the maximization is over all positive semi-definite matrices
$\bm{Q}(\bm{H})$ such that $\E(q_{u,u})\leq 1$ for all $u\in
S_1$. Under slow fading,
\begin{equation*}
    C(S_1,S_2) 
    \defeq \max
    \log \det\big(\bm{I}+\bm{H}_{S_1,S_2}^\dagger\bm{Q} \bm{H}_{S_1,S_2}\big),
\end{equation*}
where the maximization is over all positive semi-definite matrices
$\bm{Q}$ such that $q_{u,u}\leq 1$ for all $u\in S_1$.  See, e.g.,
\cite{tse}. To simplify notation, define furthermore
\begin{equation*}
    r_{S,v} \defeq \min_{u\in S}r_{u,v}
\end{equation*}
for $S\subset V(n)$. The next lemma provides an upper bound on the MIMO
capacity $C(S,S^c)$ between the nodes in $S$ and $S^c$ in terms of the
number of nodes close to the boundary between them.

\begin{lemma}
    \label{thm:mimo}
    Under either fast or slow fading, for every $\alpha > 6$, there
    exists a constant $K_1$ such that for large enough $n$ and all
    $V(n)\in\mc{V}(n)$ and $S\subset V(n)$
    \begin{equation*}
        C(S,S^c) 
        \leq K_1 \log^4(n)\big|\{v\in S^c: r_{S,v}< \log(n)+1\}\big|.
    \end{equation*}
\end{lemma}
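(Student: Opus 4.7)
The plan is to bound $C(S,S^c)$ via Hadamard's inequality and then to handle separately the receivers near and far from $S$, with $\alpha>6$ ensuring that the far contribution can be absorbed into the near one.

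First, starting from the MIMO capacity formulas under both fading models and applying Hadamard's inequality to $\det(I+H_{S,S^c}^\dagger Q H_{S,S^c})$ together with the Cauchy--Schwarz bound $\sum_{u,u'}h_{uv}^\ast q_{uu'}h_{u'v}\leq(\sum_u |h_{uv}|)^2$ (valid whenever $q_{uu}\leq 1$, or in expectation under fast fading via Jensen), I obtain
\[
C(S,S^c)\;\leq\;\sum_{v\in S^c}\log\!\Bigl(1+\bigl(\textstyle\sum_{u\in S}r_{u,v}^{-\alpha/2}\bigr)^{\!2}\Bigr).
\]
I then split $S^c$ into the near set $N=\{v\in S^c:r_{S,v}<\log(n)+1\}$ and the far set $F=S^c\setminus N$ and bound the two sums separately.

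For each $v\in N$, the regularity conditions defining $\mc{V}(n)$ (minimum separation $>n^{-1}$ and at most $\log(n)$ nodes per unit square) give $(\sum_{u\in S}r_{u,v}^{-\alpha/2})^2\leq O(n^\alpha \log^2 n)$, hence each near term is bounded by $O(\log n)$ and the total near contribution is $O(|N|\log n)$. For each $v\in F$ with $d_v=r_{S,v}\geq\log(n)+1$, an annular decomposition of $S$ around $v$ using condition~(2) of $\mc{V}(n)$ yields $\sum_{u\in S}r_{u,v}^{-\alpha/2}\leq O(\log(n)\,d_v^{2-\alpha/2})$ for $\alpha>4$, so the per-receiver contribution is at most $(\log n)^2 d_v^{4-\alpha}$ (using $\log(1+x)\leq x$).

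Summing over $F$ by grouping by $d_v$,
\[
\sum_{v\in F}(\log n)^2 d_v^{4-\alpha} = (\log n)^2\!\sum_{k\geq\log n}\!|\partial_k S|\,k^{4-\alpha},
\]
where $|\partial_k S|$ counts $S^c$-nodes at distance in $[k,k+1)$ from $S$. A geometric estimate via condition~(2) bounds $|\partial_k S|\leq O((L+k)\log n)$ for a suitable discrete perimeter $L$ of $S$, while condition~(3) of $\mc{V}(n)$ (every square of area $2\log n$ contains a node) forces the $\log(n)$-neighborhood of $S$ in $S^c$ to contain $\Omega(L)$ nodes, giving $L\leq O(|N|)$. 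For $\alpha>6$ the tail sums $\sum_k k^{4-\alpha}$ and $\sum_k k^{5-\alpha}$ from $k=\log n$ converge to $O((\log n)^{5-\alpha})$ and $O((\log n)^{6-\alpha})$ respectively, yielding a far contribution bounded by $O(|N|\log^3 n)+O(\log^3 n)$. In the degenerate case $|N|=0$, iterating condition~(3) on $\sqrt{2\log n}$-neighborhoods forces $S\in\{\emptyset,V(n)\}$ and hence $C(S,S^c)=0$; otherwise $|N|\geq 1$ and the combined bound gives $C(S,S^c)\leq K_1\log^4(n)\,|N|$ for a constant $K_1$ depending only on $\alpha$.

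The main obstacle I expect is the far-contribution bound: the naive estimate $|F|\cdot(\text{per-receiver})\lesssim n(\log n)^{6-\alpha}$ does not by itself yield a bound of the form $\log^c(n)|N|$, and one must control the boundary-layer counts $|\partial_k S|$ by a discrete perimeter of $S$ and then relate that perimeter to $|N|$ via condition~(3). The threshold $\alpha>6$ is precisely what makes the resulting series $\sum_k(L+k)k^{4-\alpha}$ starting at $k=\log n$ converge rapidly enough for the far contribution to be absorbed into $\log^4(n)\,|N|$, and this is also the step that explains why the outer bound of Theorem~\ref{thm:caching2} does not extend to smaller path-loss exponents.
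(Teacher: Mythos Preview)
Your overall strategy coincides with the paper's: Hadamard to reduce to a per-receiver bound, split $S^c$ into the near set $N=\{v:r_{S,v}<\log n+1\}$ and the far set, bound $\sum_{u\in S}r_{u,v}^{-\alpha/2}$ by an annular count, and then control the far contribution in terms of $|N|$ using the regularity of $\mc{V}(n)$. Your treatment of the near set and of the per-receiver far term is fine (and in fact slightly more self-contained than the paper, which quotes \cite[Lemma~7]{nie2} and \cite[Theorem~2.1]{jov} for those two pieces).

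The gap is exactly where you flagged it: the bound $|\partial_k S|\le O((L+k)\log n)$ with $L=O(|N|)$ is \emph{false} for general $S$. Take $S$ to consist of $m$ nodes pairwise separated by distance $\gg 2k$ (this is possible for $m\asymp n/k^2$). Then the annuli $\{x:r_{S,x}\in[k,k+1)\}$ are disjoint rings around the $m$ points, so $|\partial_k S|\asymp mk\log n$, while any reasonable ``perimeter'' of $S$ is $\Theta(m)$ and $|N|=O(m\log^3 n)$. Your additive bound would give $|\partial_k S|\le O((m\log^3 n + k)\log n)$, which for, say, $m=k=n^{1/3}$ is $O(n^{1/3}\log^4 n)$, whereas the true count is $\asymp n^{2/3}\log n$. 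The additive estimate is inherited from the convex case (Steiner's formula) and does not survive disconnected $S$.

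What the paper proves instead is a \emph{multiplicative} bound $|\partial_k S|\le K'\,k\,\log^2(n)\,|N|$, via a direct double-counting argument: for each $v\in\partial_k S$ pick its nearest $u\in S$, observe that the intersection of the disk of radius $r_{u,v}$ about $v$ with the disk of radius $\log n$ about $u$ has area $\ge\tfrac{\pi}{4}\log^2 n$ and lies entirely in $S^c$, so by condition~(3) it contains a node $\tilde v\in N$; then $r_{v,\tilde v}\in[k-\log n,\,k+1)$, and each fixed $\tilde v\in N$ can serve as witness for at most $O(k\log^2 n)$ such $v$. Substituting this multiplicative bound into your far sum gives $\log^4(n)\,|N|\sum_{k>\log n}k^{5-\alpha}$, and it is this series that forces the threshold $\alpha>6$. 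Your additive bound would only have required $\alpha>5$ on the $L$-term, which is another symptom that it is too strong.

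So the fix is not to introduce a perimeter at all but to bound $|\partial_k S|$ directly by $|N|$ through the witness argument above; once you replace $(L+k)$ by $|N|\cdot k$ (up to $\log$ factors), the rest of your proof goes through and matches the paper's.
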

\begin{IEEEproof}
    Set $S_1 \defeq S$ and $S_2 \defeq S^c$, and denote by $S_2^k$ the
    nodes in $S_2$ that are at distance between $k$ and $k+1$ from
    $S_1$, i.e., 
    \begin{equation*}
        S_2^k
        \defeq \Bigl\{v\in S_2: r_{S_1,v} \in [k,k+1)\Bigr\}.
    \end{equation*}
    Note that
    \begin{equation*}
        S_2 = \bigcup_{k=0}^{\infty} S_2^k
    \end{equation*}
    and
    \begin{equation*}
        \big|\{v\in S_2: r_{S_1,v}< \log(n)+1\}\big|
        = \sum_{k=0}^{\log(n)}\card{S_2^k}.
    \end{equation*}

    Applying the generalized Hadamard inequality, we obtain that
    under either fast or slow fading
    \begin{equation}
        \label{eq:mimo2}
        C(S_1,S_2) 
        \leq C\Bigl(S_1,\cup_{k=0}^{\log(n)}S_2^k\Bigr)
        +C\Bigl(S_1, \cup_{k>\log(n)} S_2^k\Bigr).
    \end{equation}

    For the first term in \eqref{eq:mimo2}, using Hadamard's
    inequality once more yields
    \begin{align*}
        C\Bigl(S_1,\cup_{k=0}^{\log(n)}S_2^k\Bigr)
        & \leq \sum_{k=0}^{\log(n)}\sum_{v\in S_2^k}C(S_1,\{v\}) \\
        & \leq \sum_{k=0}^{\log(n)}\sum_{v\in S_2^k}C(\{v\}^c,\{v\}).
    \end{align*}
    By \cite[Lemma~6]{nie2}, 
    \begin{equation*}
        C(\{v\}^c,\{v\}) \leq K \log(n)
    \end{equation*}
    for some constant $K < \infty$ depending only on $\alpha$, 
    and thus
    \begin{equation}
        \label{eq:mimo3}
        C\Bigl(S_1,\cup_{k=0}^{\log(n)}S_2^k\Bigr)
        \leq K\log(n)\sum_{k=0}^{\log(n)}\card{S_2^k}.
    \end{equation}

    For the second term in \eqref{eq:mimo2}, we have the following upper
    bound from slightly adapting \cite[Theorem~2.1]{jov}: Under
    either fast or slow fading,
    \begin{equation*}
        C\Big(S_1, \cup_{k>\log(n)} S_2^k\Big)
        \leq \sum_{k>\log(n)} \sum_{v\in S_2^k} 
        \Big(\sum_{u \in S_1}r_{u,v}^{-\alpha/2}\Big)^2.
    \end{equation*}
    By definition of $S_2^k$, for $v\in S_2^k$, the (open) disk of
    radius $k$ around $v$ does not contain any node in $S_1$. Moreover,
    since $V\in\mc{V}$, there are at most $\log(n)$ nodes inside every
    subsquare of $A$ of sidelength one.\footnote{To simplify notation,
    we suppress dependence of $V(n), \mc{V}(n), \Lambdaca(n), \ldots$
    within proofs whenever this dependence is clear from the context.}
    Thus, given that $\alpha > 6$, we have for any $v\in S_2^k$,
    \begin{align*}
        \sum_{u \in S_1}r_{u,v}^{-\alpha/2}
        & \leq \log(n)\sum_{\tilde{k} = k}^{\infty}10\pi (\tilde{k}+3) 
        \tilde{k}^{-\alpha/2} \\
        & \leq \tilde{K} \log(n) k^{2-\alpha/2},
    \end{align*}
    for some constant $\tilde{K} < \infty$ depending only on $\alpha$. 
    Therefore,
    \begin{equation}
        \label{eq:mimo4}
        C\Big(S_1, \cup_{k>\log(n)} S_2^k\Big)
        \leq \sum_{k>\log(n)} \card{S_2^k} \tilde{K}^2 \log^2(n) k^{4-\alpha}.
    \end{equation}

    Consider now some $v\in S_2^k$ with $k> \log(n)$, and let $u$ be the
    closest node in $S_1$ to $v$. Since $v\in S_2^k$, we must have
    \begin{equation*}
        r_{u,v}\in [k,k+1).
    \end{equation*}
    Consider the (open) disk of radius $r_{u,v}$ around $v$ and the disk
    of radius $\log(n)$ around $u$. Since $u$ is the closest node to $v$
    in $S_1$, all nodes in the disk around $v$ are in $S_2$.  Moreover,
    the intersection of the two disks has an area of at least
    $\frac{\pi}{4}\log^2(n)$. Since $V \in \mc{V}$, this implies that,
    for $n$ large enough, this intersection must contain at least one
    point, say $\tilde{v}$, and by construction
    \begin{equation*}
        \tilde{v}\in \bigcup_{\tilde{k}=0}^{\log(n)}S_2^{\tilde{k}}.
    \end{equation*}
    This shows that for every node $v$ in $S_2^k$ there exists a node
    $\tilde{v}$ in $\cup_{\tilde{k}=0}^{\log(n)}S_2^{\tilde{k}}$ such
    that
    \begin{equation*}
        r_{v,\tilde{v}} \in [k-\log(n),k+1).
    \end{equation*}
    Now, since $V\in\mc{V}$, for every node $\tilde{v}$, there are at
    most
    \begin{equation*}
        2 \pi (k+3) (\log(n)+5)\log(n)
        \leq K' k \log^2(n)
    \end{equation*}
    nodes at distance $[k-\log(n),k+1)$ for some constant $K' < \infty$. Hence
    the number of nodes in $S_2^k$ is at most
    \begin{equation}
        \label{eq:mimo5}
        \card{S_2^k}
        \leq K'k\log^2(n)\sum_{\tilde{k}=0}^{\log(n)}\card{S_2^{\tilde{k}}}.
    \end{equation}
    
    Combining \eqref{eq:mimo5} with \eqref{eq:mimo4} yields
    \begin{align}
        \label{eq:mimo6}
        C\Big(S_1, \cup_{k>\log(n)} S_2^k\Big) 
        & \leq \tilde{K}^2 \log^2(n) \sum_{k>\log(n)} \card{S_2^k} k^{4-\alpha} 
        \nonumber\\
        & \leq K' \tilde{K}^2 \log^4(n)
        \Big({\textstyle\sum_{\tilde{k}=0}^{\log(n)}}\card{S_2^{\tilde{k}}}\Big)
        \sum_{k>\log(n)} k^{5-\alpha} \nonumber\\
        & =  K'' \log^4(n)\sum_{\tilde{k}=0}^{\log(n)}\card{S_2^{\tilde{k}}},
    \end{align}
    for some constant $K'' < \infty$ depending only on $\alpha$, 
    and where we have used that $\alpha >6$.
    Finally, substituting \eqref{eq:mimo3} and \eqref{eq:mimo6} into
    \eqref{eq:mimo2} shows that
    \begin{align*}
        C(S_1,S_2)
        & \leq K_1\log^4(n)\sum_{k=0}^{\log(n)}\card{S_2^k} \\
        & = K_1\log^4(n)\big|\{v\in S^c: r_{S,v}< \log(n)+1\}\big|
    \end{align*}
    with
    \begin{equation*}
        K_1 \defeq K+K''.\qedhere
    \end{equation*}
\end{IEEEproof}

The next lemma shows that, for large path-loss exponents ($\alpha > 6$),
every cut is approximately achievable, i.e., for every cut there exists
an achievable unicast traffic matrix that has a sum rate across the cut that is
not much smaller than the cut capacity.
\begin{lemma}
    \label{thm:cuts}
    Under fast fading, for every $\alpha > 6$, there exists $b_5(n)\leq
    n^{o(1)}$ and $\lambdauc\in\Lambdauc(n)$  such that for any $n$, 
    $V(n)\in\mc{V}(n)$, and $S\subset V(n)$,
    \begin{equation}
        \label{eq:thm_cuts}
        C(S,S^c) 
        \leq b_5(n) \sum_{u\in S} \sum_{w\notin S} \lambdauc_{u,w}.
    \end{equation}
    Moreover, there exists a collection of channel gains $\mc{H}(n)$
    such that
    \begin{equation*}
        \Pp\big((h_{u,v})\in\mc{H}(n)\big)
        \geq 1-o(1)
    \end{equation*}
    as $n\to\infty$, and such that, for $(h_{u,v})\in\mc{H}(n)$,
    \eqref{eq:thm_cuts} holds for slow fading as well.
\end{lemma}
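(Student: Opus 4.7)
The plan is to construct an explicit unicast traffic matrix $\lambdauc$ that loads every cut by an amount proportional to the MIMO bound from Lemma \ref{thm:mimo}, then invoke Proposition \ref{prop:ngs} to conclude that $\lambdauc \in \Lambdauc(n)$. The natural candidate is a \emph{local} unicast pattern: pick a small constant $c_0>0$ (e.g., $c_0 = 1/\log^{10}(n)$) and set
\begin{equation*}
    \lambdauc_{u,w} \defeq c_0 \ind\{u\neq w, \ r_{u,w} < \log(n)+1\},
\end{equation*}
and then rescale by the loss factor $b_1'(n)$ of Proposition \ref{prop:ngs}. The combinatorial reason this works is that Lemma \ref{thm:mimo} controls $C(S,S^c)$ in terms of the size of the ``thick boundary'' $B(S) \defeq \{v\in S^c : r_{S,v} < \log(n)+1\}$, and every $w\in B(S)$ is the destination of at least one short-range pair $(u,w)$ with $u\in S$, so
\begin{equation*}
    \sum_{u\in S}\sum_{w\notin S}\lambdauc_{u,w} \ \geq\ c_0\,\card{B(S)} \ \geq\ \frac{c_0}{K_1\log^4(n)} C(S,S^c).
\end{equation*}
This yields $b_5(n) \leq K_1\log^4(n)/(b_1'(n)c_0) = n^{o(1)}$, using that $b_1'(n) \geq n^{-o(1)}$ by Proposition \ref{prop:ngs}.

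The main work is to show that this $\lambdauc$ is supportable on $G$, so that Proposition \ref{prop:ngs} applies. For an edge $e\in E_G$ at level $\ell$ with $1\leq\ell\leq L(n)$, the pairs $(u,w)$ that cross $e$ are those with $u\in V_{\ell,i}(n)$, $w\notin V_{\ell,i}(n)$, and $r_{u,w}<\log(n)+1$. Since $V(n)\in\mc{V}(n)$, the strip inside $V_{\ell,i}(n)$ within distance $\log(n)+1$ of the boundary has area $O(2^{-\ell}\sqrt{n}\log(n))$ and therefore contains at most $O(2^{-\ell}\sqrt{n}\log^2(n))$ nodes; each such $u$ has at most $O(\log^3(n))$ neighbors within distance $\log(n)+1$. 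Thus the load on $e$ is at most
\begin{equation*}
    c_0 \cdot O\bigl((4^{-\ell}n)^{1/2}\log^5(n)\bigr),
\end{equation*}
which is dominated by $c_e \geq (4^{-\ell}n)^{1/2}$ for every $\alpha>2$ and every $\ell\leq L(n)$. For a leaf edge ($\ell=L(n)+1$, capacity $1$) the load is $c_0\cdot O(\log^3(n)) \leq 1$ by choice of $c_0$. Hence $\lambdauc\in\gLambdauc(n)$, and by Proposition \ref{prop:ngs} the rescaled matrix $b_1'(n)\lambdauc$ belongs to $\Lambdauc(n)$ with probability $1-o(1)$.

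The fast-fading case follows immediately by combining the achievability of $b_1'(n)\lambdauc$ with the boundary argument above. For slow fading, Proposition \ref{prop:ngs} holds with probability $1-o(1)$ over the channel realizations; defining $\mc{H}(n)$ to be the set of $\{h_{u,v}\}$ for which the proposition's conclusion applies gives the required event, since neither the construction of $\lambdauc$ nor the bound of Lemma \ref{thm:mimo} depends on the realization of the phases beyond what is already captured in $\mc{V}(n)$ and $\mc{H}(n)$. The step I expect to be most delicate is the load estimate on the lower levels of $G$ (small $\ell$), where the boundary count could in principle be irregular; controlling it uses only the uniform node-density bound $\abs{V_{\ell,i}(n)} \in [4^{-\ell-1}n, 4^{-\ell+1}n]$ from $\mc{V}(n)$, so the argument is robust, but one must be careful that the polylog factors accumulated across levels remain $n^{o(1)}$ when combined with $b_1'(n)$.
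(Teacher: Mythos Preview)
Your proposal is correct and follows essentially the same approach as the paper: construct the local unicast matrix $\lambdauc_{u,w}\propto\ind\{r_{u,w}<\log(n)+1\}$, verify it is supportable on the tree $G$ by bounding the boundary-strip load on each edge (the paper phrases this as checking the square-cut conditions from \cite[Theorem~1]{nie2}, which for a tree is the same thing), invoke Proposition~\ref{prop:ngs} to place a $n^{-o(1)}$ multiple of $\lambdauc$ in $\Lambdauc(n)$, and combine with Lemma~\ref{thm:mimo}. The only differences are cosmetic (your $c_0=\log^{-10}(n)$ versus the paper's $\Theta(\log^{-2}(n))$), and your closing concern about small $\ell$ is unnecessary since your own estimate shows the load/capacity ratio is $c_0\cdot O(\log^5(n))$ uniformly in $\ell$.
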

\begin{IEEEproof}
    By Lemma \ref{thm:mimo}, for $V\in\mc{V}$
    \begin{equation}
        \label{eq:cuts1}
        C(S,S^c) 
        \leq K_1 \log^4(n)\big|\{v\in S^c: r_{S,v}< \log(n)+1\}\big|.
    \end{equation}
    Construct a unicast traffic matrix $\lambdauc\in\Rp^{n\times
    n}$ as
    \begin{equation*}
        \lambdauc_{u,w} \defeq 
        \begin{cases}
            \kappa(n) & \text{if $r_{u,w}< \log(n)+1$}, \\
            0 & \text{otherwise},
        \end{cases}
    \end{equation*}
    for some function $\kappa(n)$. We now argue that for
    $\kappa(n)=\Theta(\log^{-3}(n))$ there exists $\tilde{b}(n)\geq
    n^{-o(1)}$ such that $\tilde{b}(n)\lambdauc\in\Lambdauc$.  This
    follows from \cite[Theorem 1]{nie2} (see also Section IX.C there),
    once we show that for every $\ell\in\{1,\ldots,
    L(n)\}$ and $i\in\{1,\ldots, 4^\ell\}$ we have
    \begin{subequations}
        \label{eq:nie2}
        \begin{align}
            \sum_{u\in V_{\ell,i}}\sum_{w\notin V_{\ell,i}} \lambdauc_{u,w}
            & \leq (4^{-\ell}n)^{2-\min\{3,\alpha\}/2}, \\
            \sum_{u\notin V_{\ell,i}}\sum_{w\in V_{\ell,i}} \lambdauc_{u,w}
            & \leq (4^{-\ell}n)^{2-\min\{3,\alpha\}/2},
        \end{align}
    \end{subequations}
    and, for all $w\in V$,
    \begin{align*}
        \sum_{u\neq w}\lambdauc_{u,w} & \leq 1, \\
        \sum_{u\neq w}\lambdauc_{w,u} & \leq 1.
    \end{align*}
    Since we assume that $V\in\mc{V}$, we have for all $w\in V$
    \begin{align*}
        \sum_{u\neq w}\lambdauc_{u,w} & \leq K \log^3(n)\kappa(n), \\
        \sum_{u\neq w}\lambdauc_{w,u} & \leq K \log^3(n)\kappa(n),
    \end{align*}
    for some constant $K<\infty$. By the locality of the unicast traffic
    matrix $\lambdauc$, it can be verified that this is sufficient for
    \eqref{eq:nie2} to hold with 
    \begin{equation*}
        \kappa(n) \defeq \frac{1}{K}\log^{-3}(n).
    \end{equation*}
    Hence \cite[Theorem 1]{nie2} applies, showing that 
    $\tilde{b}(n)\lambdauc\in\Lambdauc$ for fast fading, and the same
    holds for slow fading for $\mc{H}$ with 
    \begin{equation*}
        \Pp\big((h_{u,v})\in\mc{H}\big)
        \geq 1-o(1)
    \end{equation*}
    as $n\to\infty$. 

    Now, by construction of the unicast traffic matrix $\lambdauc$, 
    \begin{align*}
        \sum_{u\in S} \sum_{w\notin S} \lambdauc_{u,w} 
        & = \big|\{(u,w)\in S\times S^c: r_{u,w}<
        \log(n)+1\}\big|\kappa(n) \\
        & \geq \big|\{w\in S^c: r_{S,w}< \log(n)+1\}\big|\kappa(n).
    \end{align*}
    Combined with \eqref{eq:cuts1}, this implies that
    \begin{align*}
        C(S,S^c) 
        & \leq {K_1\log^4(n)}\big|\{w\in S^c: r_{S,w}< \log(n)+1\}\big|\\
        & \leq \frac{K_1\log^4(n)}{\kappa(n)}\sum_{u\in S}\sum_{w\notin S} \lambdauc_{u,w}.
    \end{align*}
    Since $\tilde{b}(n)\lambdauc\in\Lambdauc$, this proves the lemma
    with
    \begin{equation*}
        b_5(n)
        \defeq \frac{K_1\log^4(n)}{\kappa(n)\tilde{b}(n)}
        \leq n^{o(1)}.
    \end{equation*}
\end{IEEEproof}

We are now ready for the proof of Lemma \ref{thm:one}.

\begin{IEEEproof}[Proof of Lemma \ref{thm:one}]
    We wish to show that, for $\alpha > 6$, there exists $b_3(n) \leq
    n^{o(1)}$ such that 
    \begin{equation*}
        \rho(\lambdaca) \leq b_3(n) \hat{\rho}(\lambdaca)
    \end{equation*}
    with $\hat{\rho}(\lambdaca)$ as defined in
    \eqref{eq:rhohatdef}. Consider the traffic matrix
    $\rho(\lambda)\cdot\lambda$ and a cut $S\subset V$ in the wireless
    network. Assume we allow the nodes on each side of the cut to
    cooperate without any restriction---this can only increase
    achievable rates. The total amount of traffic that needs to be
    transmitted across the cut is at least 
    \begin{equation*}
        \rho(\lambda)
        \sum_{U\subset S}\sum_{w\notin S} \lambdaca_{U,w}.
    \end{equation*}
    The maximum achievable sum rate (with the aforementioned node
    cooperation) is given by $C(S,S^c)$, the MIMO capacity between the
    nodes in $S$ and in $S^c$. Therefore, 
    \begin{equation*}
        \rho(\lambdaca)
        \leq 
        \frac{C(S,S^c)}{\sum_{U\subset S}\sum_{w\notin S} \lambdaca_{U,w}}.
    \end{equation*}
    Since this is true for all cuts $S\subset V$, we may optimize over
    the choice of $S$ to obtain the bound
    \begin{equation}
        \label{eq:outer1}
        \rho(\lambdaca)
        \leq \min_{S\subset V} 
        \frac{C(S,S^c)}{\sum_{U\subset S}\sum_{w\notin S} \lambdaca_{U,w}}.
    \end{equation}

    We proceed by relating the cut $S$ in the wireless network to a cut
    $\tilde{S}$ in the graph $G$. By Lemma \ref{thm:cuts}, for $V\in\mc{V}$, 
    there exists $\lambdauc\in\Lambdauc$ such that for fast fading
    \begin{equation}
        \label{eq:outer2}
        C(S,S^c) 
        \leq b_5(n) \sum_{u\in S} \sum_{w\notin S} \lambdauc_{u,w},
    \end{equation}
    and \eqref{eq:outer2} holds also for slow fading if
    $(h_{u,v})\in\mc{H}$ with $\mc{H}$ defined as in Lemma
    \ref{thm:cuts}. By \cite[Theorem 1]{nie2} (see also the discussion
    in Section~IX.D there), for $\alpha > 6$ and $V\in\mc{V}$, there
    exists $K$ such that if $\lambdauc\in\Lambdauc$ then
    $K\log^{-6}(n)\lambdauc\in\Lambdauc_G$, where $G$ is the tree graph
    defined in Section \ref{sec:main_region}.

    Now, consider any $\tilde{S}\subset V_G$ such that
    $\tilde{S}\cap V = S$. Note that $\tilde{S}$ is a cut in $G$
    separating $S$ from $V \setminus S$.  Since
    $K\log^{-6}(n)\lambdauc\in\Lambdauc_G$,
    we thus have
    \begin{equation*}
        K\log^{-6}(n) \sum_{u\in S} \sum_{w\notin S} \lambdauc_{u,w}
        \leq \sum_{\substack{(u,v)\in E_G: \\ u\in \tilde{S},v\notin \tilde{S}}} c_{u,v}.
    \end{equation*}
    By minimizing over the choice of $\tilde{S}$ such that
    $\tilde{S}\cap V = S$, we obtain
    \begin{equation}
        \label{eq:outer3}
        K\log^{-6}(n) \sum_{u\in S} \sum_{w\notin S} \lambdauc_{u,w}
        \leq \min_{\tilde{S}: \tilde{S}\cap V = S}
        \sum_{\substack{(u,v)\in E_G: \\ u\in \tilde{S},v\notin \tilde{S}}} c_{u,v}.
    \end{equation}
    Combining \eqref{eq:outer2} and \eqref{eq:outer3} shows that
    \begin{equation*}
        C(S,S^c) 
        \leq \frac{b_5(n)}{K} \log^6(n) \min_{\tilde{S}: \tilde{S}\cap V = S}
        \sum_{\substack{(u,v)\in E_G: \\ u\in \tilde{S},v\notin \tilde{S}}} c_{u,v}.
    \end{equation*}

    Together with \eqref{eq:outer1}, and using Lemma~\ref{thm:mcv},
    this yields that with probability
    \begin{equation*}
        \Pp((h_{u,v})\in\mc{H}, V\in\mc{V}) \geq 1-o(1)
    \end{equation*}
    as $n\to\infty$, we have for any caching traffic matrix $\lambdaca$ 
    \begin{align*}
        \rho(\lambdaca)
        & \leq \min_{S\subset V} \frac{C(S,S^c)}{\sum_{U\subset S}\sum_{w\notin S} \lambdaca_{U,w}} \\
        & \leq b_3(n) \min_{S\subset V} \min_{\tilde{S}\in V_{G}: \tilde{S}\cap V = S}
        \frac{\sum_{\substack{(u,v)\in E_G: \\ u\in \tilde{S},v\notin \tilde{S}}} c_{u,v}}
        {\sum_{U\subset \tilde{S}\cap V}\sum_{w\in V\setminus\tilde{S}} \lambdaca_{U,w}} \\
        & = b_3(n) \min_{\tilde{S}\subset V_G} 
        \frac{\sum_{\substack{(u,v)\in E_G: \\ u\in \tilde{S},v\notin \tilde{S}}} c_{u,v}}
        {\sum_{U\subset \tilde{S}\cap V}\sum_{w\in V\setminus\tilde{S}} \lambdaca_{U,w}} \\
        & = b_3(n) \hat{\rho}(\lambdaca),
    \end{align*}
    with
    \begin{equation*}
        b_3(n) 
        \defeq \frac{b_5(n)}{K} \log^6(n)
        \leq n^{o(1)},
    \end{equation*}
    and where we have used \eqref{eq:rhohat} for the last equality.
\end{IEEEproof}

\subsection{Proof of Lemma \ref{thm:two}}
\label{sec:outer_two}

We wish to show that there exists $b_4(n)\geq n^{-o(1)}$ 
such that for any $\lambdaca$
\begin{equation}
    \label{eq:flows2}
    \phi(\lambdaca) \geq b_4(n)\hat{\rho}(\lambdaca)
\end{equation}
with $\hat{\rho}(\lambdaca)$ as defined in \eqref{eq:rhohatdef}. To this
end, we need to argue that whenever a caching traffic matrix can be
supported over the graph $G$, then there exists at least one cut in the
graph that is approximately saturated. In other words, we need to argue
that an approximate max-flow min-cut result holds for caching traffic
over $G$. 

The proof of the lemma proceeds as follows. We first transform the
\emph{undirected} graph $G$ into an \emph{directed} graph $\tilde{G}$
such that \emph{caching} traffic can be supported over $G$ if and only
if a corresponding \emph{unicast} traffic can be supported over
$\tilde{G}$. We then argue that for unicast traffic over $\tilde{G}$ an
approximate max-flow min-cut result holds. Finally,
we map this result for unicast traffic on $\tilde{G}$ back to $G$
to obtain the desired max-flow min-cut result for caching traffic over
$G$. 

Pick any $\lambdaca \in \Rp^{2^{n}\times n}$. For $\lambdaca=\bZero$,
$\phi(\lambda)$ and $\hat{\rho}(\lambda)$ are both infinite, and
the lemma trivially holds. Assume then that $\lambdaca\neq\bZero$.  By
rescaling $\lambdaca$ if required, we can then assume without loss of
generality that
\begin{equation}
    \label{eq:normalize}
    \sum_{(U,w)} \lambdaca_{U,w} = 1.
\end{equation}
Furthermore, we can assume that $\lambdaca_{U,w} = 0$ whenever $w\in U$,
since then $w$ already has access to the message it requests.

Recall that $G$ is an \emph{undirected} capacitated graph. We construct
a \emph{directed} capacitated graph
$\tilde{G}=(V_{\tilde{G}},E_{\tilde{G}})$ as illustrated in
Fig~\ref{fig:directed}.  Take the undirected graph $G$ and turn it into
a directed graph by splitting each edge $e\in E_G$ into two directed
edges each with the same capacity as $e$. Add $2^n$ additional nodes to
$V_G$, one for each subset $U\subset V(n)$. Connect the new node
$\tilde{u}$ corresponding to $U\subset V(n)$ to each node $u\in U$ by a
directed edge $(\tilde{u},u)$ with infinite capacity $c_{\tilde{u},u}=\infty$. 
\begin{figure}[tbp]
    \begin{center}
        \scalebox{0.8}{
        \input{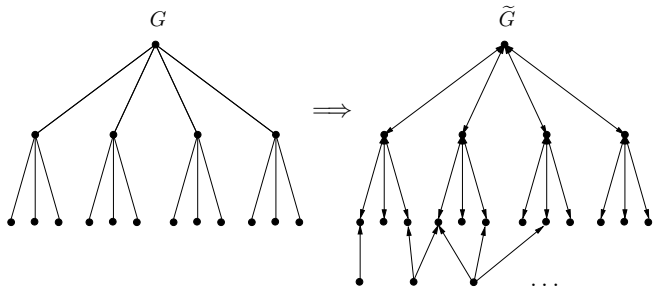}
        }
    \end{center}

    \caption{Construction of the directed graph $\tilde{G}$ from the undirected graph $G$.}

    \label{fig:directed}
\end{figure}

We call the directed version of $G$ that is contained in
$\tilde{G}$ as a subgraph its \emph{core}. Note that if some flows
can be routed through $G$ then the same flows can be routed through
the core of $\tilde{G}$, and if some flows can routed through
the core of $\tilde{G}$ then at least half of each flow can be
routed through $G$. Hence, for scaling purposes, the two are
equivalent. 

Now, assume we are given a caching traffic matrix $\lambdaca$ for $G$.
Construct a \emph{unicast} traffic matrix $\tlambdauc$ for $\tilde{G}$
by making for each $(U,w)$ pair in $G$ (i.e., $U\subset V(n)$, $w\in
V(n)$) the node $\tilde{u}$ in $\tilde{G}$ corresponding to $U$ a source
for $w$ with rate 
\begin{equation*}
    \tlambdauc_{\tilde{u},w} \defeq \lambdaca_{U,w}.
\end{equation*}
For all other node pairs, the traffic demand is set to zero.  Note that
with this construction, all traffic over $\tilde{G}$ originates at a
node in $V_{\tilde{G}}\setminus V_G $ of $\tilde{G}$ and is destined for
a node in $V(n)\subset V_{\tilde{G}}$. Denote by
$\Lambdauc_{\tilde{G}}(n)$ the set of such unicast traffic matrices that
are supportable on $\tilde{G}$, and define
\begin{equation*}
    \tilde{\phi}(\tlambdauc)
    \defeq \max\big\{\phi\geq 0: 
    \phi\tlambdauc \in \Lambdauc_{\tilde{G}}(n)\big\}
\end{equation*}
to be the equivalent description of $\Lambdauc_{\tilde{G}}(n)$.
By construction of $\tilde{G}$ from $G$, and by the above argument
relating $G$ to the core of $\tilde{G}$, we have
\begin{equation}
    \label{eq:inner2}
    \phi(\lambdaca)
    \geq \frac{1}{2}\tilde{\phi}(\tlambdauc).
\end{equation}
We have thus related caching traffic in the undirected graph $G$ to
unicast traffic in the directed graph $\tilde{G}$. 

We are then left with the problem of analyzing unicast traffic over
$\tilde{G}$. Recall that we have seen earlier that trivially
$\hat{\rho}(\lambdaca) \geq \phi(\lambdaca)$ (since the total flow over
each cut can be at most equal to the cut capacity). By
\eqref{eq:inner2}, this implies that $\tilde{\phi}(\tlambdauc) \leq
2\hat{\rho}(\lambdaca)$. The goal here is to establish that
$\hat{\rho}(\lambdaca)$ is also an approximate lower bound to
$\tilde{\phi}(\tlambdauc)$. This is nontrivial because it requires
showing that the polytope $\hLambdaca(n)$ with fewer constraints is
closely approximated by the polytope $\Lambdauc_{\tilde{G}}(n)$ with more
constraints. Specifically, we are looking for this approximation to be
of the form 
\begin{equation*}
    b(n)\hLambdaca(n) 
    \subset \Lambdauc_{\tilde{G}}(n)
    \subset 2\hLambdaca(n),
\end{equation*}
where $b(n) \geq n^{-o(1)}$.

In the recent literature on multicommodity flows, starting with works by
Leighton and Rao \cite{LR98}, and by Linial, London, and Rabinovich
\cite{llr}, such approximate max-flow min-cut results for unicast
traffic for undirected graphs have been studied. However, in our
context, two difficulties arise. First, $\tilde{G}$ is a directed graph.
While for undirected graphs with $m$ nodes $O(\log(m))$ approximation
results for the unicast capacity region of such graphs in terms of
cut-set bounds are known \cite{llr}, the best known approximation result
for general directed graphs is  $O(m^{11/23})$ up to polylog factors in
$m$ \cite{aga}. Second, the graph $\tilde{G}$ is exponentially big in
$n$.  More precisely, $\card{V_{\tilde{G}}}\geq 2^n$.  Hence even a
logarithmic (in the size $m$ of the graph) approximation result will
only yield a polynomial approximation in $n$. We are interested here in
an approximation ratio that scales like $n^{o(1)}$, i.e., strictly
sub-logarithmic in the size of $\card{V_{\tilde{G}}}$.  Nonetheless, as
we shall see, the special structure of $\tilde{G}$ can be
used to obtain an $O(\log(n)) \leq n^{o(1)}$ approximation for
$\Lambdauc_{\tilde{G}}(n)$ in terms of $\hLambdaca(n)$. 

We use an idea from \cite{kal}, namely that the unicast traffic problem
can be reduced to a maximum sum-rate problem.  More precisely, for a
subset $\tilde{F} \subset V_{\tilde{G}}\times V_{\tilde{G}}$
of $(u,w)$ pairs in $\tilde{G}$, define the \emph{maximum sum rate}
as
\begin{equation*}
    \tilde{\sigma}_{\tilde{F}}
    \defeq \max\big\{\tlambdauc_{\tilde{F}}:
    \tlambdauc\in \Lambdauc_{\tilde{G}}(n)\big\},
\end{equation*}
where here and in the following
\begin{equation*}
    \tlambdauc_{\tilde{F}} 
    \defeq \sum_{(u,w) \in \tilde{F}} \tlambdauc_{u,w}.
\end{equation*}
The quantity $\tilde{\sigma}_{\tilde{F}}$ is the largest sum rate that
can be supported between the source-destination pairs in $\tilde{F}$
over the graph $\tilde{G}$. 

We now argue that for every unicast traffic matrix $\tlambdauc$ there
exists $\tilde{F}$ such that the ratio
$\tilde{\sigma}_{\tilde{F}}/\tlambdauc_{\tilde{F}}$ is not too much
bigger than $\tilde{\phi}(\tlambdauc)$. 

\begin{lemma}
    \label{thm:three}
    Given $\tlambdauc$ on $\tilde{G}$ as described above, there
    exists a set $\tilde{F}$ of $(u,w)$ pairs with $u \in V_{\tilde{G}}\setminus V_G$
    and $w \in V(n) \subset V_{\tilde{G}}$ so that 
    \begin{equation}
        \label{eq:lemthree}
        \tilde{\phi}(\tlambdauc) 
        \geq \frac{1}{2(1+ \ln (2n^4))} 
        \frac{\tilde{\sigma}_{\tilde{F}}}{\tlambdauc_{\tilde{F}}}.
    \end{equation}
\end{lemma}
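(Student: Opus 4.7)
The plan is to carry out an LP-duality argument on the directed graph $\tilde{G}$, in the style of the classical approximate max-flow/min-cut reductions for multicommodity flow. First, I would invoke LP duality for the concurrent-flow LP defining $\tilde{\phi}(\tlambdauc)$ to obtain an optimal edge-length vector $\ell^*\geq 0$ with
\begin{equation*}
    \tilde{\phi}(\tlambdauc)
    = \frac{\sum_{e\in E_{\tilde{G}}} c_e \ell^*_e}
    {\sum_{(u,w)} \tlambdauc_{u,w}\, d_{\ell^*}(u,w)},
\end{equation*}
where $d_{\ell^*}$ denotes shortest-path distance in $\tilde{G}$. Since the super-source edges of $\tilde{G}$ carry capacity $\infty$, complementary slackness forces $\ell^*$ to vanish on them, so these distances reduce to tree-distances in $G$ from the nearest cache in $U$ to $w$.

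Second, for each threshold $t>0$ I would define $\tilde{F}_t\defeq\{(u,w):d_{\ell^*}(u,w)\geq t\}$, and apply LP duality to the maximum-sum-rate LP for $\tilde{F}_t$, namely $\tilde{\sigma}_{\tilde{F}_t}=\min\{\sum_e c_e \ell_e : d_\ell(u,w)\geq 1\ \forall (u,w)\in\tilde{F}_t,\ \ell\geq 0\}$. Since $\ell^*/t$ is by construction dual-feasible, this yields $\tilde{\sigma}_{\tilde{F}_t}\leq (1/t)\sum_e c_e\ell^*_e$. Dividing by $\tlambdauc_{\tilde{F}_t}$ and inserting the dual characterization of $\tilde{\phi}(\tlambdauc)$ reduces the lemma to exhibiting a single $t>0$ at which $t\,\tlambdauc_{\tilde{F}_t}\geq D/[2(1+\ln(2n^4))]$, where $D\defeq\sum_{(u,w)}\tlambdauc_{u,w}\,d_{\ell^*}(u,w)$.

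Such a $t$ is then located by a pigeonhole on a logarithmic grid via the layer-cake identity $D=\int_0^\infty \tlambdauc_{\tilde{F}_t}\,dt$. After normalizing $\tlambdauc$ so that $\sum_{(u,w)}\tlambdauc_{u,w}=1$ and $\ell^*$ so that $\sum_e c_e\ell^*_e=1$ (the latter forcing $d_{\ell^*}(u,w)\leq 1$ for every pair), the integral contribution from $t<1/(2n^4)$ is at most $1/(2n^4)$ by the trivial bound $\tlambdauc_{\tilde{F}_t}\leq 1$, so at least $D-1/(2n^4)$ of the mass lies in $[1/(2n^4), 1]$. Passing to $s=-\ln t$ and averaging $t\,\tlambdauc_{\tilde{F}_t}$ over this interval, whose length is $\ln(2n^4)$, produces a threshold at which the target lower bound holds, and $\tilde{F}\defeq \tilde{F}_t$ for that $t$ is the required set. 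The main obstacle is the boundary regime $D< 1/n^4$, in which the tail estimate is insufficient on its own; this regime is however benign because the normalization $\sum_e c_e\ell^*_e=1$ then forces $\tilde{\phi}=1/D$ to exceed $n^4$, leaving enough slack to establish the inequality directly by choosing $\tilde{F}$ to be a small set of high-demand pairs whose maximum sum rate is controlled via the polynomially bounded edge capacities of $G$, and matching this with the pigeonhole regime at the cutoff $1/(2n^4)$ yields the stated constant.
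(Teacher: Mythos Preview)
Your approach is essentially the paper's: both write $\tilde{\phi}(\tlambdauc)$ via LP duality for the concurrent-flow problem, threshold on the induced dual distances to produce candidate sets $\tilde{F}$, bound $\tilde{\sigma}_{\tilde{F}}$ via feasibility of the rescaled lengths for the max-sum-rate dual, and then pigeonhole on a logarithmic scale to locate a good threshold. The paper carries this out discretely, exploiting the tree structure of $G$ to note that the optimal dual distances take at most $n^2$ distinct values, ordering them, and running a harmonic-sum argument over the cumulative demands; you run the continuous analogue via the layer-cake identity and an average over $t\in[1/(2n^4),1]$ in the variable $-\ln t$. Both arrive at the same constant $2(1+\ln(2n^4))$.

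One point deserves tightening. Your treatment of the regime $D<1/n^4$ is vague, and in fact unnecessary. The paper observes directly that $\tilde{\phi}(\tlambdauc)\leq n$: with the demand normalized to $\sum_{(u,w)}\tlambdauc_{u,w}=1$ and all destinations in $V(n)$, some $w$ receives total demand at least $1/n$, while the unique leaf edge into $w$ has capacity $1$, so no multiple larger than $n$ is supportable. In your normalization $\sum_e c_e\ell^*_e=1$, this forces $D\geq 1/n$, so the boundary regime never occurs and your main averaging argument already covers all cases. Incorporating this observation would let you drop the ad hoc final clause entirely.
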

Recall that the unicast traffic matrix
$\tilde{\phi}(\tlambdauc)\tlambdauc$ is the largest scalar multiple of
$\tlambdauc$ that is supportable over $\tilde{G}$ by definition of
$\tilde{\phi}(\tlambdauc)$. Hence Lemma \ref{thm:three} shows that for a
point $\tilde{\phi}(\tlambdauc)\tlambdauc$ on the boundary of the region
$\Lambdauc_{\tilde{G}}(n)$ there exists a set of source-destination
pairs $\tilde{F}$ such that the total demand
$\tilde{\phi}(\tlambdauc)\tlambdauc_{\tilde{F}}$ between the pairs in $\tilde{F}$ is
almost as large as the maximum sum rate that is supportable between
$\tilde{F}$. Thus, for $\tilde{\phi}(\tlambdauc)\tlambdauc$, the pairs
in $\tilde{F}$ can be understood as the approximate bottleneck,
limiting further scaling of $\tlambdauc$ beyond the multiple
$\tilde{\phi}(\tlambdauc)$.

The next lemma links the ratio
$\tilde{\sigma}_{\tilde{F}}/\tlambdauc_{\tilde{F}}$ appearing in the
right-hand side of \eqref{eq:lemthree} to the equivalent description
$\hat{\rho}(\lambdaca)$ of the region $\hLambdaca(n)$.
\begin{lemma}
    \label{thm:four}
    For any set $\tilde{F}$ of $(u,w)$ pairs with
    $u\in V_{\tilde{G}}\setminus V_G$ and $w\in V(n) \subset V_{\tilde{G}}$,
    \begin{equation*}
        \frac{\tilde{\sigma}_{\tilde{F}}}{\tlambdauc_{\tilde{F}}} 
        \geq \frac{1}{4} \hat{\rho}(\lambdaca).
    \end{equation*}
\end{lemma}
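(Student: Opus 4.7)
The plan is to prove Lemma \ref{thm:four} by exhibiting a feasible unicast traffic matrix $\mu$ on $\tilde{G}$, supported only on $\tilde{F}$, of total rate $\tfrac{1}{4}\hat{\rho}(\lambda)\tilde{\lambda}^{\tsf{UC}}_{\tilde{F}}$. Specifically, set $\mu_{\tilde{u}_U, w} := \tfrac{1}{4}\hat{\rho}(\lambda)\lambda_{U,w}$ for each $(\tilde{u}_U, w) \in \tilde{F}$ and zero elsewhere. Once $\mu \in \Lambda_{\tilde{G}}^{\tsf{UC}}(n)$ is verified, the claim follows immediately since $\tilde{\sigma}_{\tilde{F}} \geq \sum_{(\tilde{u},w)\in\tilde{F}} \mu_{\tilde{u},w} = \tfrac{1}{4}\hat{\rho}(\lambda)\tilde{\lambda}^{\tsf{UC}}_{\tilde{F}}$.

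By the definition of $\hat{\rho}(\lambda)$, the caching traffic $\hat{\rho}(\lambda)\lambda$ satisfies every cut constraint in $G$, and restricting demands to those in $\tilde{F}$ only shrinks the left-hand side, so for every $\tilde{S}\subset V_G$
\begin{equation*}
    \sum_{\substack{(\tilde{u}_U,w)\in\tilde{F}:\\ U\subset \tilde{S}\cap V(n),\, w\notin\tilde{S}}} \mu_{\tilde{u}_U,w}
    \;\leq\; \tfrac{1}{4}\sum_{\substack{(u,v)\in E_G:\\ u\in\tilde{S},\, v\notin\tilde{S}}} c_{u,v}.
\end{equation*}
The analogous inequality in the reverse direction follows by applying the cut bound with $V_G\setminus\tilde{S}$ in place of $\tilde{S}$. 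These are precisely the (necessary) cut conditions for $\mu$ to be supportable as unicast traffic on $\tilde{G}$ between super-sources and destinations in $V(n)$, since $\tilde{u}_U$ lies on a given side of a finite-capacity cut exactly when $U$ lies entirely on that side.

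The next step is to establish feasibility of $\mu$ on $\tilde{G}$: route each demand $(\tilde{u}_U, w)\in\tilde{F}$ via an infinite-capacity super-source edge to some $u\in U$ and then along the unique tree path in the bidirected core to $w$, splitting the flow across $u\in U$. The main obstacle is proving that the cut conditions displayed above are \emph{sufficient} (not merely necessary) for routability on $\tilde{G}$: this is an approximate max-flow/min-cut theorem for unicast multicommodity flow on $\tilde{G}$, a bidirected tree augmented with super-source nodes having infinite-capacity edges to subsets of leaves, and the tree structure of the core is what makes it possible. One clean way to carry out this step is via LP duality: rewrite $\tilde{\sigma}_{\tilde{F}}$ as $\min\sum_e c_e y_e$ subject to $\min_{u\in U} d_y(u,w)\geq 1$ for each $(\tilde{u}_U,w)\in\tilde{F}$, where $d_y(u,w)=\sum_{e\in p_{u,w}} y_e$ is the tree metric induced by $y$, and then decompose any feasible $y$ via the level-set identity $y_e=\int_0^\infty \mathds{1}_{y_e\geq t}\,dt$. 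Each level set partitions $V_G$ into components on which the cut bound from $\hat{\rho}(\lambda)$ applies, and integrating over $t$ delivers $\sum_e c_e y_e\geq\tfrac{1}{4}\hat{\rho}(\lambda)\tilde{\lambda}^{\tsf{UC}}_{\tilde{F}}$; the factor $\tfrac{1}{4}$ is ample, absorbing a factor $\tfrac{1}{2}$ from forward-versus-reverse cut directions in the bidirected tree and a further factor $\tfrac{1}{2}$ from double-counting edges across two adjacent components of each level set.
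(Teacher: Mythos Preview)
Your overall strategy---pass to the LP dual of $\tilde{\sigma}_{\tilde{F}}$, reduce to edge lengths $y$ on the tree core with the constraint $\min_{u\in U}d_y(u,w)\geq 1$ for each $(\tilde{u}_U,w)\in\tilde{F}$, and then extract a cut bound---is sound in spirit and is different from the paper's route. The paper does not argue via level sets at all: it first passes from $\tilde{G}$ to the undirected tree $G$ (giving $\tilde{\sigma}_{\tilde{F}}\geq\sigma_F$), then invokes the Garg--Vazirani--Yannakakis multicut theorem for trees \cite[Theorem~8]{gar} to obtain a multicut $M$ with $\sigma_F\geq\tfrac12 c_M$, decomposes $V_G$ into the connected components $\{S_i\}$ of $(V_G,E_G\setminus M)$, and finishes with a pigeonhole over $i$ against the definition of $\hat{\rho}$. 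The two factors of $\tfrac12$ (from GVY and from double-counting edges across components) are exactly where the $\tfrac14$ comes from.

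Your level-set step, however, has a genuine gap as written. With the decomposition $y_e=\int_0^\infty\mathds{1}\{y_e\geq t\}\,dt$, the pair $(U,w)$ is separated by the level-$t$ edge set $E_t=\{e:y_e\geq t\}$ precisely when $\min_{u\in U}\max_{e\in p_{u,w}} y_e\geq t$. Integrating, the contribution of $(U,w)$ is $\lambda_{U,w}\cdot\min_{u\in U}\max_{e\in p_{u,w}} y_e$. But the dual constraint only gives $\min_{u\in U}\sum_{e\in p_{u,w}} y_e\geq 1$, and on a path of length $\card{p_{u,w}}\leq 2(L(n)+1)=\Theta(\log n)$ this forces only $\max_{e\in p_{u,w}} y_e\geq 1/\Theta(\log n)$. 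So your argument yields $\tilde{\sigma}_{\tilde{F}}\geq \tfrac{1}{\Theta(\log n)}\,\hat{\rho}(\lambda)\,\tlambdauc_{\tilde{F}}$, not the constant $\tfrac14$ claimed in the lemma. (This weaker bound would still suffice for Lemma~\ref{thm:two} and hence Theorem~\ref{thm:caching2}, since $\log n\leq n^{o(1)}$, but it does not prove Lemma~\ref{thm:four} as stated.)

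To repair the argument with a constant, either invoke the tree multicut theorem directly as the paper does, or replace the edge-threshold level sets by the distance-from-a-root level sets that underlie the GVY proof: fix a root, set $D(v)=d_y(\text{root},v)$, and cut at thresholds of $D$. Then each unit of $d_y$-length along any root-to-leaf path is counted exactly once, and the integral recovers $\sum_e c_e y_e$ without the $\Theta(\log n)$ loss; the remaining factors of $\tfrac12$ arise exactly as in the paper's accounting.
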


Combining Lemmas \ref{thm:three} and \ref{thm:four} with \eqref{eq:inner2} 
shows that 
\begin{align*}
    \phi(\lambdaca) 
    & \geq \frac{1}{2} \tilde{\phi}(\tlambdauc) \\
    & \geq \frac{1}{4(1+\ln(2n^4))}
    \frac{\tilde{\sigma}_{\tilde{F}}}{\tlambdauc_{\tilde{F}}} \\
    & \geq \frac{1}{16(1+\ln(2n^4))} \hat{\rho}(\lambdaca).
\end{align*}
This establishes Lemma \ref{thm:two} with
\begin{equation*}
    b_4(n) \defeq \frac{1}{16(1+\ln(2n^4))} \geq n^{-o(1)}.
\end{equation*}
\IEEEQED

It remains to prove Lemmas \ref{thm:three} and \ref{thm:four}.

\begin{IEEEproof}[Proof of Lemma \ref{thm:three}]
    Given a unicast traffic matrix $\tlambdauc$ on $\tilde{G}$ as
    described above, we want to find a set of node pairs $\tilde{F}$
    such that $\tilde{\phi}(\tlambdauc)$ is not too much smaller than
    the ratio $\tilde{\sigma}_{\tilde{F}}/\tlambdauc_{\tilde{F}}$.
   
    First, note that $\tilde{\phi}(\tlambdauc)$ is the solution to the
    following linear program
    \begin{equation}
        \label{eq:primal1}
        \begin{array}{lr@{\,}ll}
            \text{max} & \phi & & \\
            \text{s.t.} & \sum_{p\in \tilde{P}_{u,w}} f_p 
            & \geq \phi \tlambdauc_{u,w} & \ \forall u,w\in V_{\tilde{G}}, \\
            & \sum_{p\in \tilde{P}: e\in p} f_p & \leq c_e & \ \forall e \in E_{\tilde{G}}, \\
            & f_p & \geq  0 & \ \forall p\in \tilde{P},
        \end{array}
    \end{equation}
    where $\tilde{P}_{u,w}$ is the collection of all paths in $\tilde{G}$ from
    node $u$ to node $w$, and 
    \begin{equation*}
        \tilde{P} 
        \defeq \bigcup_{(u,w)\in V_{\tilde{G}}\times V_{\tilde{G}}}\tilde{P}_{u,w}.
    \end{equation*}
    The corresponding dual linear program is
    \begin{equation}
        \label{eq:dual1}
        \begin{array}{lr@{\,}ll}
            \text{min} & \sum_{e\in E_{\tilde{G}}}c_e m_e & & \\
            \text{s.t.} & \sum_{e\in p} m_e & \geq d_{u,w} 
            & \ \forall u,w\in V_{\tilde{G}}, p\in \tilde{P}_{u,w}, \\
            & \sum_{u,w\in V_{\tilde{G}}} d_{u,w} \tlambdauc_{u,w} & \geq 1 & \\
            & m_e & \geq 0 & \ \forall e\in E_{\tilde{G}}, \\
            & d_{u,w} & \geq 0 & \ \forall u,w\in V_{\tilde{G}}.
        \end{array}
    \end{equation}
    Since the all-zero solution is feasible for the primal program
    \eqref{eq:primal1}, strong duality holds, i.e., the maximum in the
    primal \eqref{eq:primal1} is equal to the minimum in the dual
    \eqref{eq:dual1}. Moreover, by weak duality, any feasible solution
    to the dual problem \eqref{eq:dual1} yields an upper bound to the
    maximum in the primal \eqref{eq:primal1}.

    Second, $\tilde{\sigma}_{\tilde{F}}$ is the solution to the linear program
    \begin{equation}
        \label{eq:primal2}
        \begin{array}{lr@{\,}ll}
            \text{max} & \sum_{(u,w)\in \tilde{F}} \sum_{p\in \tilde{P}_{u,w}} f_p & & \\
            \text{s.t.} & \sum_{p\in \tilde{P}: e\in p} f_p 
            & \leq c_e & \ \forall e \in E_{\tilde{G}}, \\
            & f_p & \geq  0 & \ \forall p\in \tilde{P},
        \end{array}
    \end{equation}
    and its dual is
    \begin{equation}
        \label{eq:dual2}
        \begin{array}{lr@{\,}ll}
            \text{min} & \sum_{e\in E_{\tilde{G}}}c_e m_e & & \\
            \text{s.t.} & \sum_{e\in p} m_e 
            & \geq d_{u,w} & \ \forall u,w\in V_{\tilde{G}}, p\in \tilde{P}_{u,w}, \\
            & d_{u,w} & \geq 1 & \ \forall (u,w)\in \tilde{F}, \\
            & m_e & \geq 0 & \ \forall e\in E_{\tilde{G}}, \\
            & d_{u,w} & \geq 0 & \ \forall u,w\in V_{\tilde{G}}.
        \end{array}
    \end{equation}
    Again strong and weak duality hold.

    Let $(m_e^\star)_{e\in E_{\tilde{G}}}$, $(d_{u,w}^\star)_{u,w\in
    V_{\tilde{G}}}$ be a minimizer for the dual \eqref{eq:dual1} of the
    unicast traffic problem. By strong duality, the minimum of the dual
    \eqref{eq:dual1} is equal to the maximum of the corresponding primal
    \eqref{eq:primal1}. We now show how $(m_e^\star)$, $(d_{u,w}^\star)$ can
    be used to construct a feasible solution to the dual
    \eqref{eq:dual2} of the maximum sum-rate problem for a specific
    choice of subset $\tilde{F}$. By weak duality, this feasible
    solution for the dual \eqref{eq:dual2} yields an upper bound on the
    maximum in the corresponding primal \eqref{eq:primal2}.  This will
    allow us to lower bound $\tilde{\phi}(\tlambdauc)$ in terms of the
    ratio $\tilde{\sigma}_{\tilde{F}}/\tlambdauc_{\tilde{F}}$ as
    required.  

    Note first that we can assume without loss of optimality that 
    \begin{equation}
        \label{eq:dmin}
        d^\star_{u,w} = 
        \begin{cases}
            0 & \text{if $\tlambdauc_{u,w} = 0$,} \\
            \min_{p\in \tilde{P}_{u,w}} \sum_{e\in p} m_e^\star &
            \text{otherwise.}
        \end{cases}
    \end{equation}
    Now, since $c_e = \infty$ whenever $e\in E_{\tilde{G}}\setminus
    E_G$, we have $m_e^\star=0$ for those edges.  Since, in addition,
    $\tlambdauc_{u,w}>0$ only if $u\in V_{\tilde{G}}\setminus V_G$ and
    if $w$ is a leaf node of $G$, this implies that $(d^\star_{u,w})_{u,w\in
    V_{\tilde{G}}}$ can take at most $n^2$ different nonzero values,
    since there are at most that many distinct paths between leaf nodes
    in the tree graph $G$. Order these values in decreasing order
    \begin{equation*}
        d^\star_1 > d^\star_2 > \ldots > d^\star_K > d^\star_{K+1} = 0
    \end{equation*}
    with $K\leq n^2$, and define for $1\leq k\leq K$
    \begin{equation}
        \label{eq:kdef}
        \tlambdauc_k 
        \defeq \sum_{u,w\in V_{\tilde{G}}: d^\star_{u,w}=d^\star_k}\tlambdauc_{u,w}.
    \end{equation}

    We now argue that $d^\star_k \leq n^2$ for all $k\in\{1,\ldots,K\}$. In
    fact, assume $d^\star_1 > n^2$, then by \eqref{eq:dmin} there exists at
    least one edge $\tilde{e}$ such that $m_{\tilde{e}}^\star>n$, because in 
    any path $P_{u,w}$, there are at most $n$ edges with non-zero $m_{\tilde{e}}^\star$
    value. Hence
    \begin{equation*}
        \sum_{e\in E_{\tilde{G}}}c_e m_e^\star
        \geq c_{\tilde{e}} m_{\tilde{e}}^\star
        > n
    \end{equation*}
    since $c_{e}\geq 1$ for all $e\in E_{\tilde{G}}$. Due to strong
    duality, this implies that the solution of the linear program
    \eqref{eq:primal1}, i.e., the value of $\tilde{\phi}(\tlambdauc)$,
    is strictly larger than $n$. But that is not possible. Indeed, due
    to the normalization assumption \eqref{eq:normalize}, we have
    $\sum_{u,w\in V_{\tilde{G}}} \tlambdauc_{u,w} = 1$. 
    By construction, all destination nodes $w$ in $\tilde{V}_G$
    are in  $V \subset V_{\tilde{G}}$, and hence there are at most
    $n$ nodes $w$ with nonzero $\tlambdauc_{u,w}$. Together, this implies that
    for at least one node $w$ the total traffic into $w$
    satisfies 
    \begin{equation*}
        \sum_{u\in V_{\tilde{G}}} \tlambdauc_{u,w} \geq \frac{1}{n}.
    \end{equation*}
    By definition, $\tilde{\phi}(\tlambdauc) \tlambdauc$ must be
    supportable in $\tilde{G}$. Since $\tilde{\phi}(\tlambdauc) > 0$,
    and since, by assumption, $\tlambdauc_{U,w}=0$ whenever $w\in U$,
    this will induce a load strictly greater than one on the finite
    capacity edge incident on $w$. As $w\in V$, this
    edge has unit capacity, which contradicts that
    $\tilde{\phi}(\tlambdauc) \tlambdauc$ is supportable.  Therefore,
    $\tilde{\phi}(\tlambdauc)$ must be no more than $n$ and hence we
    obtain that $d^\star_k \leq d_1^\star \leq n^2$ for all $1\leq k\leq
    K$. 

    We now argue that at least one of $d_k^\star$ in $1\leq k\leq K$ 
    is not too small. To that end, let $k_1 < k_2 < \ldots < k_I$ 
    be such that
    \begin{equation}
        \label{eq:opti}
        \{k_i\}_{i=1}^I = \Big\{k: \tlambdauc_k \geq \frac{1}{2n^4}\Big\},
    \end{equation}
    with $\tlambdauc_k$ as defined in \eqref{eq:kdef}.
    Note that $I\geq 1$ since otherwise
    \begin{align*}
        \sum_{u,w\in V_{\tilde{G}}}\tlambdauc_{u,w}
        & = \sum_{k=1}^K \tlambdauc_{k} \\
        & <  \frac{K}{2n^4} \\
        & \leq 1,
    \end{align*}
    contradicting the normalization assumption \eqref{eq:normalize}.
    Define
    \begin{equation*}
        s_i \defeq \sum_{j=1}^i \tlambdauc_{k_j}.
    \end{equation*}
    Using that $(d_k^\star)$ is feasible for the dual \eqref{eq:dual1}, 
    that $d^\star_k \leq n^2$, and that $K\leq n^2$, we have
    \begin{align}
        \label{eq:dualbound}
        \sum_{i=1}^I d^\star_{k_i} \tlambdauc_{k_i}
        & \geq 1 - \sum_{k: \tlambdauc_k < 1/2n^4} d^\star_k \tlambdauc_k \nonumber\\
        & \geq 1 - K n^2\frac{1}{2n^4}  \nonumber\\
        & \geq \frac{1}{2}.
    \end{align}
    We argue that this implies existence of $i$ such that 
    \begin{equation}
        \label{eq:distbound}
        d_{k_i}^\star \geq \frac{1}{2s_i(1+\ln(2n^4))}.
    \end{equation}
    Indeed, assume \eqref{eq:distbound} is false for all $i$. Then
    \begin{align*}
        \sum_{i=1}^I d^\star_{k_i} \tlambdauc_{k_i} 
        & < \frac{1}{2(1+\ln(2n^4))}\sum_{i=1}^I 
        \frac{\tlambdauc_{k_i}}{s_i} \nonumber\\ 
        & \stackrel{(a)}{=}  \frac{1}{2(1+\ln(2n^4))}\Big(1+{\textstyle\sum_{i=2}^I} 
        \frac{s_i-s_{i-1}}{s_i}\Big) \\ 
        & \stackrel{(b)}{\leq}  \frac{1}{2(1+\ln(2n^4))}\Big(1+{\textstyle\sum_{i=2}^I}
        \big( \ln(s_i)-\ln(s_{i-1}) \big) \Big) \\
        & =  \frac{1}{2(1+\ln(2n^4))}
        \Big(1+\ln\Big({s_I}\big/{\tlambdauc_{k_1}}\Big)\Big) \nonumber\\
        & \stackrel{(c)}{\leq}  \frac{1}{2(1+\ln(2n^4))}\big(1+\ln(2n^4)\big) \\
        & = \frac{1}{2}, \nonumber
    \end{align*}
    where we have used that $I\geq 1$ in $(a)$, that
    $1-x\leq-\ln(x)$ for every $x\geq 0$ in $(b)$, and
    that $s_I\leq 1$ by \eqref{eq:normalize} and $\tlambdauc_{k_1} \geq
    \frac{1}{2n^4}$ in $(c)$.  This contradicts
    \eqref{eq:dualbound}, showing that \eqref{eq:distbound} must hold for some
    $i$. Consider this value of $i$ in the following.

    Now, consider the following set $\tilde{F}$ of $(u,w)$ pairs:
    \begin{equation*}
        \tilde{F} 
        \defeq \big\{ (u,w): d_{u,w}^\star \geq d_{k_i}^\star \big\}.
    \end{equation*}
    Note that, by \eqref{eq:dmin}, $\tilde{F}$ contains only pairs
    $(u,w)$ such that $u\in V_{\tilde{G}}\setminus V_G$ and $w\in
    V \subset V_{\tilde{G}}$ (i.e., nodes in $\tilde{G}$
    corresponding to leaf nodes in $G$). Set 
    \begin{align*}
        d_{u,w} & \defeq \frac{d_{u,w}^\star}{d^\star_{k_i}}, \\
        m_e & \defeq \frac{m_e^\star}{d^\star_{k_i}}
    \end{align*}
    for all $u,w\in V_{\tilde{G}}$.  Note that, for $(u,w)\in\tilde{F}$, 
    \begin{equation*}
        d_{u,w} 
        = \frac{d^\star_{u,w}}{d^\star_{k_i}} 
        \geq 1,
    \end{equation*}
    and that for all $u,w\in V_{\tilde{G}}$, $p\in \tilde{P}_{u,w}$,
    \begin{align*}
        \sum_{e\in p} m_e
        & = \frac{1}{d_{k_i}^\star}\sum_{e\in p} m_e^\star \\
        & \geq \frac{1}{d_{k_i}^\star} d_{u,w}^\star \\
        & =  d_{u,w},
    \end{align*}
    by feasibility of $(d_{u,w}^\star)$ and $(m_e^\star)$ for the dual
    \eqref{eq:dual1}. Hence, for this $\tilde{F}$, the choice of
    $(m_e)$ and $(d_{u,w})$ is feasible for the dual
    \eqref{eq:dual2}. By weak duality, any feasible solution for the
    dual \eqref{eq:dual2} yields an upper bound for the corresponding
    primal \eqref{eq:primal2}. Therefore
    \begin{align*}
        \tilde{\sigma}_{\tilde{F}} 
        & \leq \sum_{e\in E_{\tilde{G}}}c_e m_e \\
        & = \frac{1}{d^\star_{k_i}} \sum_{e\in E_{\tilde{G}}}c_e m_e^\star.
    \end{align*}
    By \eqref{eq:distbound},
    \begin{equation*}
        d_{k_i}^\star \geq \frac{1}{2s_i(1+\ln(2n^4))},
    \end{equation*}
    and, since $d_{k_j}^\star\geq d_{k_i}^\star$ for all $j\leq i$, 
    \begin{align*}
        s_i 
        & = \sum_{j=1}^i \tlambdauc_{k_j} \\
        & = \sum_{j=1}^i \sum_{(u,w): d_{u,w}^\star=d_{k_j}^\star} \tlambdauc_{u,w} \\
        & \leq \sum_{(u,w): d_{u,w}^\star\geq d_{k_i}^\star} \tlambdauc_{u,w} \\
        & = \sum_{(u,w)\in\tilde{F}} \tlambdauc_{u,w} \\
        & = \tlambdauc_{\tilde{F}}
    \end{align*}
    (note that the last equality is simply the definition of
    $\tlambdauc_{\tilde{F}}$). Therefore,
    \begin{align*}
        \tilde{\sigma}_{\tilde{F}} 
        & \leq \frac{1}{d^\star_{k_i}} \sum_{e\in E_{\tilde{G}}}c_e m_e^\star \\
        & \leq 2s_i(1+\ln(2n^4))\sum_{e\in E_{\tilde{G}}}c_e m_e^\star \\
        & \leq 2\tlambdauc_{\tilde{F}}(1+\ln(2n^4)) \sum_{e\in E_{\tilde{G}}}c_e m_e^\star.
    \end{align*}
    Since, by assumption, $(m_e^\star)$ is optimal for the dual
    \eqref{eq:dual1}, and by strong duality, we have
    \begin{equation*}
        \sum_{e\in E_{\tilde{G}}}c_e m_e^\star = \tilde{\phi}(\tlambdauc),
    \end{equation*}
    and hence
    \begin{equation*}
        \tilde{\phi}(\tlambdauc)
        \geq \frac{1}{2(1+\ln(2n^4))} 
        \frac{\tilde{\sigma}_{\tilde{F}}}{\tlambdauc_{\tilde{F}}}.\qedhere
    \end{equation*}
\end{IEEEproof}

\begin{IEEEproof}[Proof of Lemma \ref{thm:four}]
    We wish to analyze maximum sum rates
    $\tilde{\sigma}_{\tilde{F}}$ in $\tilde{G}$ for sets
    $\tilde{F}$ such that for $(\tu,w)\in \tilde{F}$ we have $\tu\in
    V_{\tilde{G}}\setminus V_G$ and $w\in V \subset V_G\subset
    V_{\tilde{G}}$. Notice that, due to this form of $\tilde{F}$
    and since the edges in $E_{\tilde{G}}\setminus E_G$ have 
    infinite capacity, this analysis can be done by considering only the
    core of $\tilde{G}$. More precisely, for a collection of node pairs
    $\tilde{F}$ in $\tilde{G}$ as above, we construct a collection
    of node pairs $F$ in $G$ as follows. For each
    $(\tilde{u},w)\in\tilde{F}$, note that by construction 
    $\tilde{u}$ is connected to a subset $U \subset V 
    \subset V_G \subset V_{\tilde{G}}$ of nodes. For each $(\tilde{u},w)\in\tilde{F}$, 
    add $(u,w)$ to $F$ for each such $u\in U$. Denote by $\sigma_F$ the maximum
    sum rate for $F$ in $G$. Since $G$ is the undirected version of the core
    of $\tilde{G}$, we have
    \begin{equation}
        \label{eq:inner4}
        \tilde{\sigma}_{\tilde{F}}
        \geq \sigma_F.
    \end{equation}

    For a collection of node pairs $F$ in $G$, we call a set of edges
    $M$ a \emph{multicut} for $F$ if in the graph $(V_G, E_G\setminus
    M)$ each pair in $F$ is disconnected. For a subset $M\subset E_G$,
    define
    \begin{equation*}
        c_M \defeq \sum_{e\in M} c_e.
    \end{equation*}
    From the definition of a multicut, it follows directly that $\sigma_F
    \leq c_M$. More surprisingly, it is shown in \cite[Theorem
    8]{gar} that if $G$ is an undirected tree, then for every $F\in
    V_G\times V_G$ there exists a multicut $M$ for $F$ such that
    \begin{equation}
        \label{eq:inner5}
        \sigma_F
        \geq \frac{1}{2}c_M.
    \end{equation}

    Next, we show how the edge cut $M\subset E_G$ can be transformed into a
    node cut $S\subset V_G$. Denote by $\{S_i\}$ the connected components of
    $(V_G,E_G\setminus M)$. We can assume without loss of generality that
    \begin{equation*}
        M = \bigcup_i (S_i\times S_i^c)\cap E_G,
    \end{equation*}
    since otherwise we can remove the additional edges from $M$ to create a
    smaller multicut for $F$. We therefore have
    \begin{equation}
        \label{eq:inner6a}
        c_M = \frac{1}{2}\sum_{i} c_{(S_i^c\times S_i)\cap E_G},
    \end{equation}
    since every edge in $M$ appears exactly twice in the sum on the
    right-hand side. Define for $S\subset V_G$ 
    \begin{equation*}
        \lambdaca_{S} 
        \defeq \sum_{U\subset S\cap V}\sum_{w\in V\setminus S} \lambdaca_{U,w},
    \end{equation*}
    as the total caching traffic that needs to be transmitted between
    $S\cap V$ and $V\setminus S$. $M$ is a multicut for $F$ induced by $\tilde{F}$,
    and hence for every $(\tu,w)\in \tilde{F}$ and the corresponding
    pair $(U,w)$, $M$ separates $w$ from all the nodes in $U$.
    Therefore, for each such $(U,w)$ pair, there exists a set $S_i$ such
    that $w \in S_i$, $U\subset S_i^c$. This shows that
    \begin{equation}
        \label{eq:inner6b}
        \tlambdauc_{\tilde{F}}
        \leq \sum_{i} \lambdaca_{S_i^c}.
    \end{equation}
    Equations \eqref{eq:inner5}, \eqref{eq:inner6a}, and
    \eqref{eq:inner6b} imply that there exists $j$ such that
    \begin{equation*}
        \begin{aligned}
            \frac{\tilde{\sigma}_{\tilde{F}}}{\tlambdauc_{\tilde{F}}}
            & \geq \frac{1}{4}
            \frac{\sum_i c_{(S_i^c\times S_i)\cap E_G}}{\sum_{i} \lambdaca_{S_i^c}} \\
            & \geq \frac{1}{4}
            \frac{c_{(S_{j}^c\times S_{j})\cap E_G}}{\lambdaca_{S_{j}^c}} \\
            & \geq \frac{1}{4}
            \min_{S\subset V_G} \frac{c_{(S\times S^c)\cap E_G}}{\lambdaca_{S}} \\
            & = \frac{1}{4}\hat{\rho}(\lambdaca),
        \end{aligned}
    \end{equation*}
    where in the last equality we have used \eqref{eq:rhohat}.
    This completes the proof of Lemma \ref{thm:four}. 
\end{IEEEproof}

\section{Conclusions}
\label{sec:conclusions}

We have analyzed the influence of caching on the performance of wireless
networks. Our approach is information-theoretic, yielding an inner bound
on the caching capacity region for all values $\alpha>2$ of path-loss
exponent, and a matching (in the scaling sense) outer bound for $\alpha
>6$. Thus, in the high path-loss regime $\alpha >6$, this provides a
scaling characterization of the complete caching capacity region.  Even
though this region is $2^{n}\times n$-dimensional, i.e., exponential in
the number of nodes $n$ in the wireless network, we have presented an
algorithm that checks approximate feasibility of a particular caching
traffic matrix efficiently, namely in polynomial time in the description
length of the caching traffic matrix. Achievability is proved using a
three-layer communication architecture. The three layers deal with
optimal selection of caches, choice of amount of necessary cooperation,
noise and interference, respectively. The matching (again in the scaling
sense) converse proves that addressing these questions separately is
without loss of order-optimality in the high path-loss regime. That is,
source-channel separation is close to optimal for caching traffic in
this regime.

We view this result as a step towards understanding the performance loss
incurred due to source-channel separation for the transmission of
arbitrarily correlated sources. Determining the performance loss for
such a separation based strategy for all values of $\alpha > 2$ for
caching traffic, and more generally for sources with arbitrary
correlation are interesting questions for future research.

\section{Acknowledgments}

The authors would like to thank the anonymous reviewers for their
comments.

\bibliographystyle{IEEEtran}
\bibliography{journal_abbr,caching}

\end{document}